\documentclass[sigplan,screen, nonacm]{acmart}
\renewcommand\footnotetextcopyrightpermission[1]{}
\AtBeginDocument{%
  }

\setcopyright{none}

\usepackage{xspace}
\usepackage{booktabs}
\usepackage{tabularx}

\usepackage[htt]{hyphenat}  % permit hyphenation in tt fonts
\newcommand{\origunderscore}{}
\let\origunderscore\_
\renewcommand{\_}{\origunderscore\allowbreak}
\usepackage{subcaption}
\usepackage{listings}
\usepackage{xcolor}
\usepackage{amsmath}
\usepackage{multirow}
\usepackage{colortbl}
\usepackage{makecell}
\usepackage{amsthm}
\usepackage{enumitem}

\newtheorem{definition}{Definition}
\newtheorem{theorem}{Theorem}

\definecolor{addgreen}{RGB}{0,128,0}
\definecolor{delred}{RGB}{180,0,0}

\definecolor{revisionblue}{RGB}{0,90,180}
\definecolor{commentred}{RGB}{200,30,30}
\lstdefinelanguage{json}{
  basicstyle=\ttfamily\footnotesize,
  string=[s]{"}{"},
  stringstyle=\color{red!70!black},
  comment=[l]{//},
  commentstyle=\color{gray},
  morecomment=[s]{/*}{*/},
  literate=
    *{:}{{{\color{blue!70!black}:}}}{1}
     {,}{{{\color{blue!70!black},}}}{1}
     {\{}{{{\color{blue!70!black}\{}}}{1}
     {\}}{{{\color{blue!70!black}\}}}}{1}
     {[}{{{\color{blue!70!black}[}}}{1}
     {]}{{{\color{blue!70!black}]}}}{1}
     {true}{{{\color{teal}true}}}{4}
     {false}{{{\color{teal}false}}}{5}
     {null}{{{\color{teal}null}}}{4},
}

\newcommand{\care}{\textsc{Echo}\xspace}
\newcommand{\tr}{\textsc{DR}\xspace}
\newcommand{\pr}{\textsc{SR}\xspace}
\newcommand{\fcopy}{f_\text{copy}}
\newcommand{\bench}[1]{$B_{\text{#1}}$}
\newcommand{\unif}{\mathrel{\perp}}
\newcommand{\hb}{\prec}
\newcommand{\hbp}{\prec'}
\newcommand{\rd}{\mathrm{rd}}
\newcommand{\wrt}{\mathrm{wr}}
\newcommand{\Acc}{\mathrm{Acc}}
\newcommand{\Lk}{\mathcal{L}}
\newcommand{\live}{\mathit{live}}
\newcommand{\sem}[1]{[\![ #1 ]\!]}

\begin{document}

\title{\care{}: Merging Host–Device Buffers to Avoid Redundant Data Movement on Unified-Memory SoCs}

\author{Yuheng Zhu}
\authornote{Contributed equally.}
\email{yzhu63@ncsu.edu}
\affiliation{%
  \institution{North Carolina State University}
  \city{Raleigh}
  \state{North Carolina}
  \country{USA}}

\author{Yanbo Zhao}
\authornotemark[1]
\email{yzhao62@ncsu.edu}
\affiliation{%
  \institution{North Carolina State University}
  \city{Raleigh}
  \state{North Carolina}
  \country{USA}}

\author{Jiajia Li}
\email{jiajia.li@ncsu.edu}
\affiliation{%
  \institution{North Carolina State University}
  \city{Raleigh}
  \state{North Carolina}
  \country{USA}}

\author{Man-Ki Yoon}
\email{man-ki.yoon@ncsu.edu}
\affiliation{%
  \institution{North Carolina State University}
  \city{Raleigh}
  \state{North Carolina}
  \country{USA}}

\begin{abstract}

% CUDA applications on unified-memory (UMA) edge platforms such as NVIDIA Jetson 
GPU applications on unified-memory (UMA) edge platforms often inherit a discrete-GPU memory abstraction in which they allocate one buffer for the CPU, another for the GPU, and copy data between them before and after GPU execution. On UMA hardware these buffers reside in the same physical DRAM pool, so the copies consume bandwidth, time, and energy.
Despite the growing adoption of UMA platforms, 
this pattern remains common because production software stacks, libraries, and samples were written for portability across discrete 
% and integrated 
GPUs. 
% Eliminating such copies is not a local \texttt{memcpy} optimization. The original copies can also encode software semantics. For instance, an H2D copy can create a snapshot, a D2H copy can synchronize GPU results before host reads, and distinct host/device pointers may be observable by the program. 
However, removing these copies is not as simple as merging the two buffers, because the original program may rely on the two buffers being distinct, or on the copy itself ordering CPU and GPU accesses. 

% We present \care{}, a system that coalesces legacy host/device allocation pairs only when these effects are either statically ruled out or constrained by a profile-guided deployment contract.
\care{} removes these copies only when it preserves the data values and access ordering on which the program depends.
% \care{} separates semantic soundness from binary deployability. 
It does so along two complementary paths: source-level rewriting and binary deployment.
\care{}-\pr{} is a compile-time LLVM transformation that proves safety 
% for source-available code 
and rewrites accepted pairs in place. \care{}-\tr{} is a 
% guarded 
profile-guided 
binary optimizer for existing applications that profiles and validates stable allocation/copy patterns 
and, at runtime, intercepts the matching calls to unify
profile-matched pairs 
while preserving the ordering effects of removed copies.

Across seven benchmarks on three NVIDIA Jetson platforms, \care{}'s benefit grows with the fraction of baseline time spent on copies. 
Copy-dominated workloads speed up by up to 7.05$\times$, closed-source end-to-end applications speed up by up to 1.40$\times$.
% Both source and binary paths achieve $\geq$99\% of the performance achievable by manual optimization.
On the five source-available Orin benchmarks, both automatic paths recover $\geq$99\% of the gain of a manually optimized reference.
% \yoon{\bf Guys. How about we say the speed up is 705\% not 7.05x?}

% \YHcomment{Used 1.23x in figures to distinguish with proportion percentage, so better stick to that in text.}

\end{abstract}

% \begin{CCSXML}
% <ccs2012>
%  <concept>
%   <concept_id>10010520.10010553.10010562</concept_id>
%   <concept_desc>Computer systems organization~Heterogeneous (hybrid) systems</concept_desc>
%   <concept_significance>500</concept_significance>
%  </concept>
%  <concept>
%   <concept_id>10011007.10011006.10011008.10011009.10011012</concept_id>
%   <concept_desc>Software and its engineering~Software performance</concept_desc>
%   <concept_significance>300</concept_significance>
%  </concept>
% </ccs2012>
% \end{CCSXML}

% \ccsdesc[500]{Computer systems organization~Heterogeneous (hybrid) systems}
% \ccsdesc[300]{Software and its engineering~Software performance}

% \keywords{unified memory, CUDA, memory copy elimination, edge inference, Jetson, LLVM, profile-guided optimization}

\maketitle

\section{Introduction}
\label{sec:introduction}

\begin{figure*}[t]
  \centering
  \includegraphics[width=\textwidth]{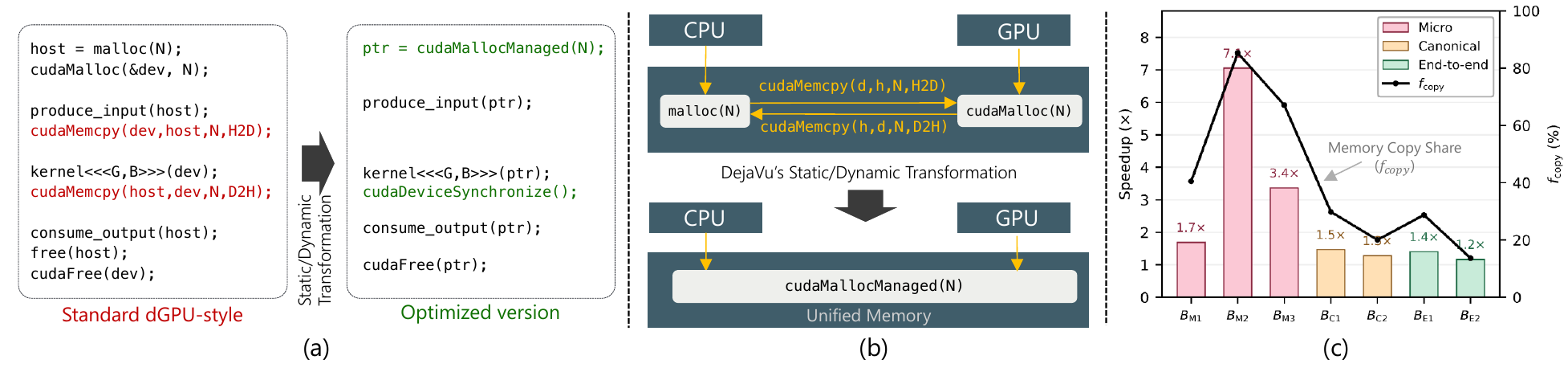}
%   \caption{Overview of \care{}.
%   % (a)~The dGPU-style allocation pattern allocates separate host and device buffers and copies data between them (left, \textcolor{delred}{red}); \care{} replaces both with a single \texttt{cudaMallocManaged} pointer and eliminates the redundant copies (right, \textcolor{addgreen}{green}).
%   % (b)~On a UMA platform, \texttt{malloc} and \texttt{cudaMalloc} map to the same physical DRAM (top); \care{} unifies them into one managed allocation (bottom), removing the \texttt{cudaMemcpy} calls that moved data within the same memory.
%   %(c)~Measured \tr{} speedup on Jetson AGX Orin across seven benchmarks spanning three tiers (micro, canonical inference, real-world). Speedup scales with copy fraction.
%   (a) \& (b) \care{} detects and unifies separate host and device allocations into a single managed allocation and eliminates redundant copies on UMA platforms. 
%   (c) Speedup (bar) achieved by \care{} on NVIDIA Jetson AGX Orin, overlaid with $f_{copy}$ (line), the fraction of total runtime spent on memory copies before transformation. 
%   }
  \caption{Overview of \care{}. (a) \& (b) \care{} unifies separate host and device buffers into one managed allocation and removes redundant copies while preserving the required synchronization. (c) Speedup (bar) achieved by \care{} on NVIDIA Jetson AGX Orin, overlaid with $f_{copy}$ (line), the fraction of total runtime spent on memory copies before transformation. }
  \label{fig:teaser_combined}
\end{figure*}

Heterogeneous systems-on-chip (SoCs) that integrate a CPU and a GPU on the same die with a shared physical DRAM, known as \emph{unified-memory architectures} (UMA), are increasingly prevalent in edge inference, robotics, autonomous systems, and mobile computing~\cite{chen2019deepedge, sze2017efficient}. Sharing one physical DRAM between processors removes the off-chip data path that limits discrete CPU--GPU systems, while reducing power, cost, and board area.

Yet most accelerator code on UMA platforms continues to manage memory as if the GPU were discrete. Applications ported from discrete-GPU (dGPU) platforms or built on top of legacy libraries and frameworks still preserve the discrete-GPU programming model: allocate a host buffer, allocate a device buffer, copy host-to-device (H2D) before GPU execution, and copy device-to-host (D2H) afterward. Because both buffers reside in the same physical DRAM, these copies are \emph{redundant}. It consumes unnecessary DMA bandwidth, CPU and GPU cycles, and energy, as it only transfers data between two virtual addresses backed by the same memory, without crossing any off-chip device boundary. The root cause is not a hardware limitation but inertia. Widely-used libraries, manufacturer-provided samples, and developer workflows were designed for discrete GPUs and remain largely unchanged even as UMA platforms became common. Portability concerns further reinforce it, since code that must also run on discrete-GPU systems follows the same legacy pattern by default. 
% Applications ported from discrete GPUs or built around their libraries often retain separate host and device buffers. They copy input to the device buffer before GPU execution and copy results back afterward. On UMA platforms, these transfers consume memory bandwidth, time, and energy within one DRAM pool. The two allocations can nevertheless hold distinct program states, so sharing physical memory alone does not make the copies removable. The opportunity is to coalesce objects whose separate states are unnecessary while preserving the required CPU--GPU ordering. Libraries, manufacturer-provided samples, and developer workflows retain this pattern in part because the same code must also run on discrete GPUs.

Among accelerator programming models, CUDA (Compute Unified Device Architecture) is the most widely deployed on UMA edge platforms. NVIDIA Jetson, the dominant commercial UMA SoC family, exposes its compute capability through CUDA, and the production inference stacks built on top of it (e.g., TensorRT, cuDNN, cuBLAS) follow the same host-device separate-buffer convention by default. For example, NVIDIA's TensorRT samples~\cite{trtsamples} ship a \texttt{BufferManager} class that allocates host and device buffers separately and copies data between them explicitly, and downstream developers often reuse the same structure. We conduct a static survey of 34 open-source Jetson GPU repositories (\S\ref{sec:prevalence}) and find that 19 of them (56\%) still contain the legacy pattern, rising to 10 of 11 (91\%) among inference-specific projects. Together, these numbers show that the GPU software ecosystem has not fully adapted to UMA edge platforms. We therefore take CUDA on Jetson as our concrete study platform. However, the analysis and transformation we develop are not inherently platform-specific.

Figure~\ref{fig:teaser_combined}(a) contrasts the legacy dGPU-style pattern with what UMA-aware code should do instead. The cost of leaving this redundancy in place is not negligible. On a representative TensorRT~\cite{tensorrt} inference pipeline running FCN-ResNet50~\cite{long2015fcn, he2016resnet} on Jetson AGX Orin~\cite{jetsonorin}, we measure that over 30\% of per-iteration wall-clock time is spent on H2D and D2H copies (see Table~\ref{tab:motivation}) that are avoidable on a UMA platform.

Eliminating this redundancy requires solving two sub-problems: (1) \emph{pattern detection}, identifying a single logical buffer that has been split into a separate host and device allocation linked by copies; and (2) \emph{transformation}, replacing the allocation pair with a single buffer shared by both host and device, and removing or bypassing the associated copies while preserving the synchronization they enforce. Despite how mechanical this may appear, neither step is a naive find-and-replace. The transformation is \emph{safe} only when no other part of the program relies on the host and device pointers being distinct, or on the copy enforcing an order between host and GPU accesses. Proving that automatically is hard, and harder still in production software, where the copy path may lie inside a precompiled library with no source to analyze.

% Semantic safety alone does not establish a performance benefit. On AGX Thor, allocation unification remains profitable for cuFFT (\bench{M2}) but slows cuDNN convolution (\bench{M3}) on the evaluated stack. A valid transformation can therefore be unprofitable: automation must establish when buffers can share storage and decide when doing so is worth deploying.

% We present \care{}, a system that automatically detects and safely eliminates these redundant copies on UMA platforms. It does so both when source is available and when it is not, trading a universal soundness guarantee for deployability in the latter case. 
% \textbf{\care{}-{\pr{}} (Static Replacement)} is a compile-time LLVM pass for source-available code. It identifies host/device pairs whose unification provably cannot change the program's observable behavior, then transforms each pair into a single unified allocation and replaces copies with explicit synchronization where needed (\S\ref{sec:pr}).
% \textbf{\care{}-{\tr{}} (Dynamic Replacement)} is a binary-level optimizer for deployments where 
% no source is available to rewrite or the copy path is buried inside a precompiled runtime. 
% Because it has no source-level view, \tr{} cannot soundly verify the static safety conditions for all inputs.
% Instead, it conservatively applies guarded, profile-matched allocation coalescing for stable binary deployments and provides \emph{profile-conditional correctness}: the optimization holds as long as runtime execution matches the profile (\S\ref{sec:tr}).
% 
We present \care{}, a system that automatically detects and eliminates redundant copies on UMA platforms. It combines source-level safety analysis with profile-guided specialization for binary deployments.
\textbf{\care{}-{\pr{}} (Static Replacement)} couples an LLVM transformation with sufficient conditions for sharing storage while preserving data values, access order, and lifetime. Its correctness argument covers pairs satisfying these conditions in the supported execution model (\S\ref{sec:pr}).
% \textbf{\care{}-{\tr{}} (Dynamic Replacement)} applies allocation coalescing to application binaries without recompilation. It identifies recurring allocation/copy patterns, validates host accesses during the relevant GPU-use windows, and calibrates the complete-loop benefit before enabling a plan. Its deployment contract combines these observations with stable allocation, access, and synchronization behavior (\S\ref{sec:tr}).}
\textbf{\care{}-{\tr{}} (Dynamic Replacement)} applies allocation unification to application binaries without recompilation. It identifies recurring allocation/copy patterns, validates host accesses during GPU-use windows, and calibrates the complete-loop benefit before enabling a plan. Its deployment contract combines these observations with assumptions about buffer values, pointer use, and synchronization (\S\ref{sec:tr}).

% We evaluate \care{} on seven benchmarks and a negative control: three CUDA-library micro-benchmarks (cuBLAS~\cite{cublas}, cuFFT~\cite{cufft}, cuDNN~\cite{cudnn}), two canonical TensorRT inference pipelines (FCN-ResNet50~\cite{long2015fcn, he2016resnet}, FCN-ResNet101), and two end-to-end applications (YOLOv5~\cite{yolov5trt}, DeepSORT~\cite{deepsorttrt}). 
We evaluate \care{} on seven benchmarks and a negative control: three CUDA-library micro-benchmarks (cuBLAS~\cite{cublas}, cuFFT~\cite{cufft}, cuDNN~\cite{cudnn}), two canonical TensorRT inference pipelines (FCN-ResNet50~\cite{long2015fcn, he2016resnet}, FCN-ResNet101), and two application workloads (YOLOv5 detection~\cite{yolov5trt} and DeepSORT feature extraction~\cite{deepsorttrt}).
% whose critical copy paths are inside closed-source TensorRT runtime code. 
We run the evaluation across three NVIDIA Jetson platforms (Orin Nano, AGX Orin, and AGX Thor), covering two GPU architectures (Ampere and Blackwell) and varying memory bandwidths and compute capabilities. 
% The results are largely explained by the baseline copy fraction: observed speedup tracks the baseline copy fraction across platforms.
Figure~\ref{fig:teaser_combined}(c) summarizes the representative result on AGX Orin. 
% On AGX Orin, copy-dominated micro-benchmarks improve by up to 6.9$\times$, canonical inference pipelines improve by 1.28--1.47$\times$, and closed-source end-to-end applications improve by 1.10--1.14$\times$ without source changes.
% When both paths apply, \pr{} and \tr{} recover $\geq$99\% of manually optimized performance.
% With \care{}, copy-dominated micro-benchmarks achieve up to 6.9$\times$ speedup, canonical inference pipelines 1.28--1.47$\times$, and closed-source end-to-end applications 1.10--1.14$\times$ without source changes.
% On the five source-available benchmarks where both \pr{} and \tr{} apply, each reaches $\geq$99\% of manually optimized performance.
On Orin, copy-dominated micro-benchmarks achieve up to 7.05$\times$ speedup, and closed-source applications improve by 1.17--1.40$\times$. Over the complete benchmark loop, \care{} raises YOLOv5 throughput from 194.5 to 272.4 images/s and DeepSORT feature extraction from 94.9 to 106.8 frame jobs/s. On the five source-available Orin benchmarks, both automatic paths closely match the performance of manual optimization.
% \yanbocomment{Number check: 6.9$\times$ here vs.\ 7.05$\times$ in the abstract and in \S\ref{sec:eval:tr:micro} (\bench{M2} on Orin). Pick one value and propagate it to the abstract, here, and the teaser caption.}
% \yanbocomment{Number check: 1.10--1.14$\times$ agrees with \bench{E1} 1.10$\times$ and \bench{E2} 1.14$\times$ in \S\ref{sec:eval:tr:realapp}, but the abstract states 1.17--1.40$\times$ for the same claim. The abstract range has no matching source in \S\ref{sec:evaluation}; decide which is correct before the next revision.}
% \YHcomment{All speedup references raised in this comment now use the current measurement.}
% \YHcomment{The primary-region maximum is now consistently 7.05$\times$. Application results use the complete loop: 194.5 to 272.4 images/s for YOLOv5 and 94.9 to 106.8 feature-extraction frame jobs/s for DeepSORT on Orin, corresponding to 40.1\% and 12.6\% gains. This replaces both conflicting application ranges and the unsupported universal 99\% manual-performance claim.}
% The evaluation also exposes deployment boundaries: TensorRT-internal copies outside the CUDA runtime API limit achievable speedup, and platform-specific managed-memory behavior can change whether allocation coalescing is profitable.
% Beyond latency, DejaVu also yields energy savings. On Orin, per-iteration energy reduction ranges from 9\% to 86\% and scales with the copy fraction.
The shorter primary timing regions also lower the corresponding estimated rail energy.
% \YHcomment{The former 9--86\% energy summary mixed timing scopes. This summary no longer repeats that range or implies that the energy figure has already been regenerated.}

% The evaluation also exposes two deployment boundaries. First, copies inside a closed-source runtime such as TensorRT that never cross the intercepted API are invisible to \tr{} and limit the achievable speedup. Second, managed-memory performance varies by platform, so coalescing is not always profitable.

In summary, we make the following contributions:
\begin{enumerate}[leftmargin=2em, itemsep=0pt]
  % \item A deployment characterization of the legacy \texttt{malloc}--\allowbreak\texttt{cudaMalloc}--\allowbreak\texttt{cudaMemcpy} pattern in Jetson CUDA software, including its prevalence and its relevance to TensorRT-style inference stacks.
  \item A deployment characterization of the discrete-GPU separate-buffer pattern in widely-used Jetson CUDA software, including its prevalence and its relevance to production inference stacks.
  % \item A safety model for host/device allocation coalescing that separates snapshot preservation, synchronization preservation, pointer identity, and lifetime requirements.
  % \item \care{}-\pr{}, a sound LLVM transformation for source-available code that proves safety before coalescing allocation pairs and eliminating the associated copies.
  \item \care{}-\pr{}, a sound LLVM transformation that coalesces allocation pairs and eliminates the associated copies only after proving the transformation is safe.
  % \item \care{}-\tr{}, a profile-guided binary optimizer that extends allocation coalescing to stable CUDA runtime API patterns in unmodified binaries.
  \item \care{}-\tr{}, a profile-guided binary optimizer that brings the same allocation coalescing to unmodified, closed-source binaries by matching stable, API-visible allocation/copy patterns.

  % \item A cross-platform evaluation showing that speedup and energy savings scale with copy fraction, that binary-only applications can benefit without recompilation, and that platform-specific managed-memory behavior defines an important deployment boundary.
  \item An evaluation across three Jetson platforms and seven benchmarks showing that speedup and energy savings scale with copy fraction, that binary-only applications can benefit without recompilation, and that platform-specific managed-memory behavior defines a deployment boundary.

\end{enumerate}

% §2 Background and Motivation

\section{Background and Motivation}
\label{sec:background}

\subsection{Unified Memory Architecture}
\label{sec:uma}
Modern edge SoCs integrate CPU and GPU on the same die with a shared DRAM pool.
Figure~\ref{fig:teaser_combined}(b) contrasts this \textit{unified memory architecture} (UMA) with the discrete-GPU model. On a discrete GPU, \texttt{malloc} allocates in host DRAM, \texttt{cudaMalloc} allocates in device DRAM, and \texttt{cudaMemcpy} transfers data across the PCIe bus. Therefore, the cost is inherent to the physical separation. On a UMA platform such as NVIDIA's Jetson AGX Orin, both \texttt{malloc} and \texttt{cudaMalloc} allocate from the same 64 GB LPDDR5 DRAM. The two calls return pointers to \textit{different} virtual-address regions, but both reside in the same DRAM pool. A \texttt{cudaMemcpy} between them copies data from one region to another within the same physical memory. This transfer moves no data across any bus, yet still consumes DMA bandwidth and CPU and GPU cycles.
% Figure~\ref{fig:teaser_combined}(b) contrasts UMA with the discrete-GPU model. On a discrete GPU, \texttt{malloc} and \texttt{cudaMalloc} allocate in separate host and device memory pools, with transfers typically crossing PCIe or another interconnect. On Jetson AGX Orin, both allocations draw from the same 64\,GB LPDDR5 DRAM pool. They remain distinct storage objects, and \texttt{cudaMemcpy} moves bytes between their regions through the memory system. A shared pool creates an opportunity to eliminate this traffic when the objects can safely share storage.

On UMA platforms, duplicate host buffers also consume the same physical memory pool available to the GPU~\cite{cudategra}. When both host and device copies are resident, these duplicate allocations can increase the peak shared-memory footprint and reduce the capacity available for model parameters, intermediate tensors, and GPU workspaces.

CUDA provides \texttt{cudaMallocManaged}, which returns a single pointer accessible by both CPU and GPU. On UMA platforms like Jetson, managed memory maps directly to shared DRAM without triggering page faults or data migration. However, most CUDA code in practice still follows the legacy \texttt{malloc} + \texttt{cudaMalloc} + \texttt{cudaMemcpy} pattern, as most code was originally written for or ported from discrete-GPU environments. \care{} targets this pattern.

\subsection{CUDA Unified Virtual Memory}
\label{sec:uvm-background}

CUDA provides a related but distinct mechanism called \textit{Unified Virtual Memory} (UVM), which offers a single-pointer programming model for CPU--GPU memory. On discrete GPUs, UVM relies on demand paging with hardware page-fault handling, which introduces migration overhead proportional to access patterns~\cite{allen2021sc, ganguly2019isca}. UVMBench~\cite{gu2020uvmbench} measures an average 34.2\% kernel slowdown under UVM compared to explicit data management across their discrete-GPU benchmark suite, and UVMDiscard~\cite{zhu2022uvmdiscard} identifies cases where UVM migrates data that is overwritten before being read. These overheads stem from physical page migration across a PCIe or NVLink bus.
On integrated UMA platforms, there is no page migration because CPU and GPU share the same physical DRAM. The overhead targeted in our work is different: it arises from \textit{virtual-address duplication} and explicit \texttt{cudaMemcpy} calls inherited from dGPU-style code, not from page faults or data migration. Consequently, managed-memory accesses avoid the interconnect migration cost that makes UVM expensive on discrete GPUs. On platforms that provide coherent shared-memory mappings, kernel accesses can be comparable to ordinary allocations, although caching and coherence policies remain platform-dependent.

\subsection{Measured Opportunity}
\label{sec:motivation}

\begin{table}[t]
  \centering\footnotesize
  \caption{Per-stage latency breakdown for FCN-ResNet50 inference on Jetson AGX Orin (5 independent runs, mean $\pm$ std.~dev.). \care{} eliminates H2D and D2H copies, reducing iteration time by 31.8\% (1.47$\times$ speedup). }
  \label{tab:motivation}
  \begin{tabular}{@{}lrrr@{}}
    \toprule
    \textbf{Stage} & \textbf{Baseline (ms)} & \textbf{\care{} (ms)} & \textbf{Reduction} \\
    \midrule
    Preprocess & $0.335 \pm 0.001$ & $0.326 \pm 0.002$ & 2.6\% \\
    H2D copy & $0.295 \pm 0.009$ & \textbf{0.001 $\pm$ 0.000} & \textbf{99.8\%} \\
    Inference & $11.456 \pm 0.015$ & $11.479 \pm 0.011$ & $-0.2$\% \\
    D2H copy & $5.248 \pm 0.012$ & \textbf{0.007 $\pm$ 0.000} & \textbf{99.9\%} \\
    Postprocess & $0.002 \pm 0.000$ & $0.002 \pm 0.000$ & $0$\% \\
    \midrule
    \textbf{Total} & \textbf{17.336 $\pm$ 0.010} & \textbf{11.815 $\pm$ 0.011} & \textbf{31.8\%} \\
    \bottomrule
  \end{tabular}
\end{table}

% To quantify the optimization opportunity, we profile a canonical TensorRT~\cite{tensorrt} inference pipeline running FCN-ResNet50~\cite{long2015fcn, he2016resnet} semantic segmentation on a Jetson AGX Orin.
% Table~\ref{tab:motivation} reports the mean $\pm$ standard deviation of per-stage latency across five independent runs, each consisting of 100 warm-up iterations followed by 100 timed iterations, profiled with Nsight Systems~\cite{nsys}.
% The ``\care{}'' column shows the result of running \care{}'s automatic transformation (detailed in~\S\ref{sec:design}) on the binary, not a hand-rewritten version.

% We find that \care{} eliminates H2D and D2H copies almost entirely ($>$99.7\% reduction), reducing the mean iteration time from 17.57\,ms (baseline) to 11.99\,ms, a 1.47$\times$ speedup.
% The copy phases account for 32\% of baseline iteration time as measured by the per-stage breakdown (the GPU-side $\fcopy$ reported in~\S\ref{sec:eval:metrics} gives the slightly lower value of 30.2\% because it excludes CPU-side copy setup overhead). 
% % With \care{}, all of this copy time is recoverable on UMA.
% We observe that inference time is unchanged ($-$0.4\%), confirming that the transformation does not perturb the compute path. Figure~\ref{fig:timeline} visualizes this effect on the per-iteration timeline.

We profile FCN-ResNet50~\cite{long2015fcn, he2016resnet} inference on Jetson AGX Orin using Nsight Systems~\cite{nsys}. Table~\ref{tab:motivation} breaks down per-stage latency before and after applying \care{}-\tr{} to the unmodified binary.
Copies account for 32\% of baseline iteration time. \care{} eliminates them almost entirely, reducing iteration time by 31.8\% (1.47$\times$).  %while leaving inference time unchanged ($-$0.4\%). 
To establish that this result is close to the practical optimum, we also measured a hand-rewritten version that manually replaces the \texttt{malloc}+\texttt{cudaMalloc} pair with \texttt{cudaMallocManaged} and removes the redundant copies. Over the same five independent runs, this manually optimized variant achieves $11.786 \pm 0.008$\,ms, only 0.029\,ms faster than \care{}.
\care{} therefore achieves close to the practical optimum on this benchmark.

% \begin{figure}[t]
% \centering
% \includegraphics[width=\columnwidth]{figures/2-background/timeline.pdf}
% \caption{Execution timeline for the target pattern (\S\ref{sec:target-pattern}), with timing from \bench{C1} on Orin. Top: baseline with per-iteration H2D and D2H copies. Bottom: \care{}-optimized, copies eliminated, each iteration reduced to kernel execution only.}
% \label{fig:timeline}
% \end{figure}

\subsection{Prevalence of the Target Pattern}
\label{sec:prevalence}

\begin{table}[tb]
  \centering\footnotesize
  \caption{Prevalence of the dGPU-style allocation pattern across 34 open-source Jetson repositories. \emph{Copy-only}: uses only explicit \texttt{cudaMemcpy}. \emph{Mixed}: combines explicit copies with some managed or zero-copy paths. } 
  \label{tab:prevalence}
  \begin{tabular}{@{}lrrrr@{}}
    \toprule
     \shortstack[l]{\textbf{Project}\\\textbf{Domain (\#)}} & \shortstack[r]{\textbf{Explicit}\\\textbf{copy}} & \shortstack[r]{\textbf{Candidate}\\\textbf{pattern}} & \shortstack[r]{\textbf{Copy}\\\textbf{only}} & \textbf{Mixed} \\
    \midrule
    Inference (11)      & 10 & 10 & 2 & 8 \\
    Robotics (13)        &  8 &  5 & 5 & 3 \\
    Computer Vision (8) &  3 &  2 & 1 & 2 \\
    Signal Proc. (1)    &  1 &  1 & 0 & 1 \\
    Other (1)           &  1 &  1 & 0 & 1 \\
    \midrule
    \textbf{Total (34)}  & \textbf{23} & \textbf{19} & \textbf{8} & \textbf{15} \\
    \bottomrule
  \end{tabular}
\end{table}

The \texttt{malloc}--\texttt{cudaMalloc}--\texttt{cudaMemcpy} pattern is not an artifact of our benchmarks. To assess its prevalence, we surveyed 34 open-source repositories designed for or deployed on Jetson platforms, spanning inference, robotics, computer vision, and signal processing, collected from a variety of sources such as the community projects list and tutorials. For each, we performed a static source scan for the co-occurrence of \texttt{malloc} (or \texttt{new}), \texttt{cudaMalloc}, and \texttt{cudaMemcpy} with \texttt{HostToDevice} or \texttt{DeviceToHost} direction.

Of the 34 repositories surveyed (Table~\ref{tab:prevalence}), 19 (56\%) contain the candidate \texttt{malloc}--\allowbreak\texttt{cudaMalloc}--\allowbreak\texttt{cudaMemcpy} pattern that \care{} targets. Among inference-specific projects (a primary deployment scenario for edge AI), the prevalence rises to 10 of 11 (91\%). Among the repositories with explicit copies, eight use them exclusively with no UMA-aware alternative, while the remaining fifteen combine explicit copies with some managed or zero-copy paths (See Copy-only vs.\ Mixed columns in Table~\ref{tab:prevalence}). The static scan identifies \textit{syntactic} co-occurrence of the pattern, not semantic optimizability. Whether a given pair can be unified requires either \care{}'s source-level safety analysis or the guarded runtime profiling used by \tr{}. Nonetheless, the survey confirms that the target pattern is widespread in practice.

% \input{sections/3-design}
% §3 Design
\providecommand{\unif}{\mathrel{\sim}}
\providecommand{\sem}[1]{\mathcal{O}(#1)}
\providecommand{\rd}{\operatorname{rd}}
\providecommand{\wrt}{\operatorname{wr}}
\providecommand{\Acc}{\operatorname{Acc}}
\providecommand{\hb}{\mathrel{\prec_P}}
\providecommand{\hbp}{\mathrel{\prec_{P'}}}
\providecommand{\Lk}{\mathcal{L}}
\providecommand{\live}{\operatorname{live}}

\section{Design of \care{}}
\label{sec:design}

% \begin{figure*}[t]
%   \centering
%   \includegraphics[width=\textwidth]{figures/3-design/pr_cases.pdf}
%   % \caption{\pr{} safety checks illustrated with code examples. Left: a passing pair and its transformed output. Right: five counterexamples, one per check. S1 rejects a host write between the H2D copy and kernel (line~6). S2 rejects a host read between the kernel and D2H copy (line~7). S3 rejects a pointer that escapes to a global variable (line~5). S4 rejects a write through an aliased pointer (line~7). S5 rejects an allocation inside a conditional branch that does not dominate its uses (line~4 and 8).}
%   \caption{\pr{} safety checks illustrated with code examples. S1 rejects a host write between the H2D copy and kernel (line~6). S2 rejects a host access (read/write) between the kernel and D2H copy (line~7). S3 rejects a pointer that escapes to a global variable (line~5). S4 rejects a write through an aliased pointer (line~7). S5 rejects an allocation inside a conditional branch that does not dominate its uses (line 5).}
%   \label{fig:pr_decision}
% \end{figure*}
\begin{figure*}[t]
  \centering
  % Keep the original code panels; replace their old footer with labels
  % matching the conditions below. In particular, S1 writes after H2D.
  \includegraphics[width=\textwidth]{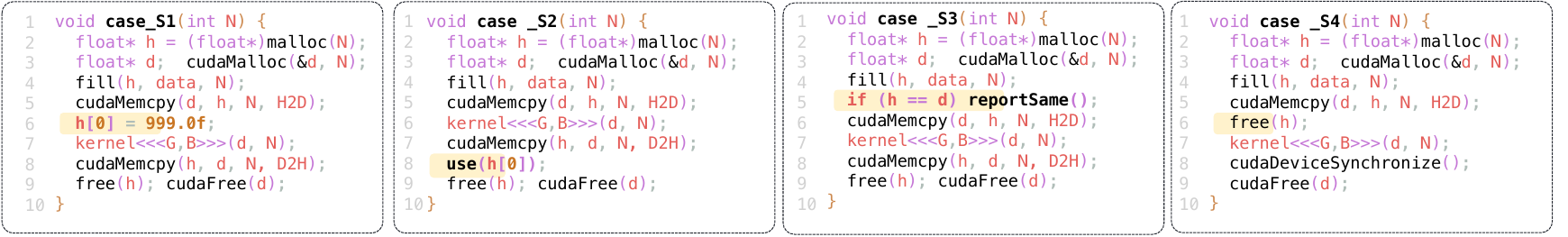}
  \par\smallskip
%   \makebox[0.2\textwidth][c]{\scriptsize S1: Value preservation}%
%   \makebox[0.2\textwidth][c]{\scriptsize S2: Access ordering}%
%   \makebox[0.2\textwidth][c]{\scriptsize S3: Pointer use}%
%   \makebox[0.2\textwidth][c]{\scriptsize S4: Alias coverage}%
%   \makebox[0.2\textwidth][c]{\scriptsize S5: Allocation lifetime}
%   \caption{Examples motivating \pr{}'s safety conditions. S1 exposes a host write that destroys the device's input snapshot. S2 highlights an unordered host access to a buffer that the GPU may write. S3 shows a pointer escaping the analysis, and S4 shows a conflicting write through an alias that must also be checked. S5 requires the allocation to dominate its uses.}
  \makebox[0.25\textwidth][c]{\scriptsize S1: live device value overwritten}%
  \makebox[0.25\textwidth][c]{\scriptsize S2: preserve D2H completion}%
  \makebox[0.25\textwidth][c]{\scriptsize S3: pointer identity is observable}%
  \makebox[0.25\textwidth][c]{\scriptsize S4: keep p alive through GPU use}%
\caption{Four examples motivating \pr{}'s conditions for replacing host allocation $h$ and device allocation $d$ with one managed allocation $p$. \textbf{S1:} after H2D, the kernel still needs the value in $d$; writing $h[0]$ would overwrite it in $p$. \textbf{S2:} the synchronous D2H copy completes the relevant GPU work before \texttt{use(h[0])}, so removing it requires an equivalent completion wait. \textbf{S3:} unification changes the observable test \texttt{h == d} from false to true. \textbf{S4:} although the original program frees $h$ before the kernel uses $d$, $p$ must remain live until both original lifetimes and the GPU work end.}
  \label{fig:pr_decision}
\end{figure*}

\subsection{Overview}
\label{sec:overview}

% \care{} automatically detects instances of the dGPU-style allocation pattern on UMA platforms and coalesces host/device allocation pairs when the required safety or deployment conditions hold.
% It provides two approaches: \pr{}, which proves safety at compile time for available source code, and \tr{}, which applies guarded profile-matched optimization at runtime for unmodified binaries.

\care{} automatically detects the dGPU-style allocation pattern on UMA platforms and coalesces each such pair into a single buffer, removing the copies that become redundant, but only when it can do so without changing the program's behavior.
It provides two approaches: \pr{}, which proves this at compile time when source is available, and \tr{}, which conservatively optimizes only guarded, profile-matched patterns at runtime for unmodified binaries.

\subsection{\pr{}: Source-Level Provably Safe Unification}
\label{sec:sr-formal}
\label{sec:pr}

When source code is available, \pr{} decides at compile time whether a host/device allocation pair can share one managed buffer without changing the program's behavior. We define the candidate pair and the transformation (\S\ref{sec:pr-transform}), give sufficient conditions for its safety (\S\ref{sec:pr-decision}), and show that they preserve observable behavior (\S\ref{sec:pr-correctness}). Table~\ref{tab:sr-notation} summarizes the notation.

\begin{table}[t]
\centering\footnotesize
\caption{Notation for \S\ref{sec:sr-formal}.}
\label{tab:sr-notation}
% \begin{tabular}{@{}ll@{}}
\begin{tabularx}{\columnwidth}{@{}l>{\raggedright\arraybackslash}X@{}}
\toprule
Symbol & Meaning \\
\midrule
$h \unif d$ & candidate pair: host buffer $h$, device buffer $d$, both $N$ bytes \\
$h \unif d \to p$ & unifying $h$ and $d$ into one managed buffer $p$ \\
$P,\ P'$ & program before and after $h \unif d \to p$ \\
$\sem{P}$ & externally visible behavior of $P$: I/O and exit status, \\
 & excluding allocation, copy, and synchronization calls \\
$h[i],\ d[i]$ & byte $i$ of $h$ and $d$, \ $0 \le i < N$ \\
$\bar{x}$ & partner byte: $\overline{h[i]} = d[i]$, \ $\overline{d[i]} = h[i]$ \\
$E,\ E'$ & events of matching executions of $P$ and $P'$ \\
$\rd(e),\ \wrt(e)$ & bytes read and written by event $e$ \\
$\Acc(X)$ & events that read or write a byte in $X$ \\
$\hb,\ \hbp$ & happens-before in $P$ and in $P'$ \\
$\Lk$ & linking copies: events that copy $h[i] \to d[i]$ or $d[i] \to h[i]$ \\
$\mathrm{sync}(c)$ & events that restore the orderings provided by copy $c$ \\
$\mathrm{Vis}_{\pr}(h,d)$ & accesses \pr attributes to $h$ or $d$ \\
\bottomrule
% \end{tabular}
\end{tabularx}
\end{table}

% \yanbocomment{This subsection restates \pr{}'s transformation and safety conditions in one notation (Table~\ref{tab:sr-notation}).}

\subsubsection{Transformation}
\label{sec:pr-transform}

\pr{} starts by identifying \emph{candidate pairs} in the program. A candidate pair is the dGPU-style idiom of Figure~\ref{fig:teaser_combined}(a): $h$ is a host staging buffer for $d$, and a full-buffer H2D copy fills $d$ from $h$, so byte $h[i]$ corresponds to $d[i]$. Copies between corresponding bytes, which we call \emph{linking copies}, exist only to keep the two buffers consistent. Once $h$ and $d$ share storage on UMA, they become redundant.

\begin{definition}[Candidate pair]
\label{def:sr-candidate}
\label{def:optimizable-pair}
% $h \unif d$ holds iff $h = \texttt{malloc}(N)$ and $d = \texttt{cudaMalloc}(N)$ have the same size $N$, and the original program $P$ contains the full-buffer linking copy \texttt{cudaMemcpy(d,\,h,\,N,\,H2D)}.
Buffers $h$ and $d$ form a \emph{candidate pair}, written $h \unif d$, if the original program $P$ contains the following calls, all with the same size $N$:
\begin{center}\footnotesize
\begin{tabular}{@{}l@{\quad}l@{}}
\texttt{h = malloc(N);} & \textit{// host buffer} \\
\texttt{cudaMalloc(\&d, N);} & \textit{// device buffer} \\
\texttt{cudaMemcpy(d, h, N, H2D);} & \textit{// full-buffer H2D copy} \\
\end{tabular}
\end{center}
A \emph{linking copy} moves bytes between $h$ and $d$ at the same offset, from $h[i]$ to $d[i]$ or back. $\Lk$ is the set of linking-copy events.
\end{definition}

Given a candidate pair, \pr{} merges $h$ and $d$ into one managed buffer $p$ and deletes their linking copies while keeping any ordering those copies imposed. The following definition describes only this rewrite. \S\ref{sec:pr-decision} gives the conditions under which it is safe.

\begin{definition}[Unification $h \unif d \to p$]
\label{def:sr-unification}
% The transformed program $P'$ is obtained from $P$ by four rules.
Unifying each accepted candidate pair $h \unif d$ in $P$ into its own managed buffer $p$ transforms $P$ into $P'$ by four rules.
\begin{enumerate}[label=\textbf{R\arabic*}, leftmargin=2em, itemsep=0pt]
\item \emph{Allocation.} Replace \texttt{malloc(N)} and \texttt{cudaMalloc(\&d,\,N)} with one \texttt{cudaMallocManaged(\&p,\,N)}.
\item \emph{Renaming.} Replace every use of $h$ and $d$ with $p$, so byte $i$ of each buffer, $h[i]$ and $d[i]$, becomes $p[i]$.
% \item \emph{Copies.} Delete every linking copy $c \in \Lk$, which moves each $h[i]$ to $d[i]$ or back. Where $c$ provides an ordering,\footnote{That is, some $a \hb c \hb b$ is not implied by the remaining events, for example when $c$ blocks the host or orders work across streams. After R2 the data movement of $c$ is redundant, but such orderings are not.} insert an event or synchronization, written $\mathrm{sync}(c)$, that keeps each happens-before ordering $a \hb c \hb b$ of $P$ as $a \hbp b$ in $P'$.
\item \emph{Copies.} Replace every linking copy $c \in \Lk$ with $\mathrm{sync}(c)$, a synchronization that accesses no memory and keeps each happens-before ordering $a \hb c \hb b$ of $P$ as $a \hbp b$ in $P'$.\footnote{After R2 the data movement of $c$ is redundant, but the orderings it provides are not, for example when $c$ blocks the host or orders work across streams. For a synchronous D2H copy, $\mathrm{sync}(c)$ can be \texttt{cudaDeviceSynchronize()}. It is empty when the remaining events already imply every $a \hb c \hb b$.}
% \item \emph{Deallocation.} Replace \texttt{free(h)} and \texttt{cudaFree(d)} with one \texttt{cudaFree(p)} after both, and after all pending GPU accesses to $p$.
\item \emph{Deallocation.} Replace \texttt{free(h)} and \texttt{cudaFree(d)} with one \texttt{cudaFree(p)} placed after both original frees and after all GPU work that accesses $p$ completes.
\end{enumerate}

% Let $E$ and $E'$ be the events of matching executions of $P$ and $P'$. Each event of $E \setminus \Lk$ has a matching event in $E'$, and we use the same name for both. The events of $\mathrm{sync}(c)$ are new in $E'$ and access no memory. R3 guarantees
% \[
% \textbf{(P)}\qquad \forall a, b \in E \setminus \Lk:\quad a \hb b \;\Rightarrow\; a \hbp b ,
% \]
% that is, every ordering of the original program survives in $P'$.
\end{definition}

\emph{Matching executions.} Let $E$ and $E'$ be the events of executions of $P$ and $P'$ on the same input and external choices. The \emph{retained} events $\hat{E}$ are the events of $E$ other than linking copies and the allocations and frees of $h$ and $d$. Each retained event has a matching event in $E'$, and we use the same name for both. The events of $\mathrm{sync}(c)$ are new in $E'$. R3 then guarantees
\[
\textbf{(OP)}\qquad \forall a, b \in \hat{E}:\quad a \hb b \;\Rightarrow\; a \hbp b ,
\]
that is, every ordering among retained events of $P$ survives in $P'$. Allocations and frees are excluded because R1 and R4 move them, and S4 constrains them instead.

% \emph{Liveness.} A byte $x$ is \emph{live after} $e$ if some read $r \neq e$ that is not before $e$ may still see the value $x$ holds after $e$:
% \begin{align*}
% \live(e, x) \;\triangleq\;& \exists r \neq e.\; x \in \rd(r) \,\wedge\, r \not\hb e \\
% &\wedge\, \neg\exists w.\; x \in \wrt(w) \,\wedge\, e \hb w \hb r .
% \end{align*}
% Here $\rd(e)$ and $\wrt(e)$ are the bytes that event $e$ reads and writes, and $w$ ranges over writes. The condition $r \not\hb e$ covers both a read after $e$ and a read unordered with $e$. An unordered read may run after $e$, so it counts as live.

\subsubsection{Safety Conditions}
\label{sec:pr-decision}

Unification is not always safe, because the program may see different values, orderings, pointer identities, or lifetimes once $h$ and $d$ share storage (Figure~\ref{fig:pr_decision}). \pr{} therefore checks each candidate pair against the conditions of Definition~\ref{def:sr-safe} and unifies only the pairs that satisfy all of them. S1 relies on liveness, which we define first.

\vspace{0.25\baselineskip}
\emph{Liveness.} 
% A byte $x$ is \emph{live after} an event $e \in E$ if some read $r \neq e$ that is not before $e$ may still see the value $x$ holds after $e$:
Let $\rd(e)$ and $\wrt(e)$ be the bytes that an event $e$ reads and writes. A byte $x$ is \emph{live after} an event $e \in E$ if another event $r$ can still read the value $x$ holds after $e$, that is, $r$ reads $x$, $r$ is not ordered before $e$, and no write to $x$ is ordered between $e$ and $r$. Formally,
\begin{align*}
\live(e, x) \;\triangleq\;& \exists r \neq e.\; x \in \rd(r) \,\wedge\, r \not\hb e \\
&\wedge\, \neg\exists w.\; x \in \wrt(w) \,\wedge\, e \hb w \hb r .
\end{align*}
% Here $\rd(e)$ and $\wrt(e)$ are the bytes that event $e$ reads and writes, and $w$ ranges over writes. The condition $r \not\hb e$ covers both a read after $e$ and a read unordered with $e$. An unordered read may run after $e$, so it counts as live.
Because $r \not\hb e$ also admits reads unordered with $e$, which may run after $e$, such reads keep $x$ live too.

\begin{definition}[Safe unification]
\label{def:sr-safe}
\label{def:safety-conditions}

% $h \unif d \to p$ is \emph{safe} iff every execution of $P$ satisfies S1--S4.
We call $h \unif d \to p$ \emph{safe} iff S1--S4 hold for every execution of $P$ and its matching execution of $P'$.

\begin{enumerate}[label=\textbf{S\arabic*}, leftmargin=2em, itemsep=2pt]
\item \emph{Value.} 
% A write to $x$ also overwrites its partner $\bar{x}$, the byte at the same offset in the other buffer, in $P'$, so the old value of $\bar{x}$ is never read again.
In $P'$, a write event $e$ to a byte $x$ of $h$ or $d$ also overwrites its partner $\bar{x}$, the byte at the same offset in the other buffer, so the old value of $\bar{x}$ must never be read after $e$.
\begin{align*}
&\forall e \in \hat{E},\ \forall x \in \wrt(e) \cap (h \cup d):\\
&\qquad \neg\,\live(e, \bar{x}) \,\wedge\, \bar{x} \notin \rd(e)
\end{align*}
% In words, when an ordinary write updates a byte of $h$ or $d$, no later or concurrent read still needs the byte at the same offset in the other buffer.
Here, $\neg\live(e, \bar{x})$ rules out any other read that could still see the old value of $\bar{x}$, and $\bar{x} \notin \rd(e)$ rules out $e$ itself reading $\bar{x}$, which in $P'$ could return $e$'s own write. S1 skips linking copies, because after such a copy $x$ and $\bar{x}$ hold the same value, so merging them loses nothing.

\item \emph{Order.} 
% (P) keeps every original ordering. In addition, accesses to $h[i]$ and $d[i]$, which become conflicting accesses to $p[i]$, must be ordered in $P'$.
% \begin{align*}
% &\forall a \in \Acc(h[i]),\ b \in \Acc(d[i]):\\
% &\qquad a \neq b \,\wedge\, \wrt(a) \cup \wrt(b) \neq \emptyset \;\Rightarrow\; a \hbp b \,\vee\, b \hbp a
% \end{align*}
An access to $h[i]$ and an access to $d[i]$ touch the same byte $p[i]$ in $P'$, so if either one writes, they must be ordered in $P'$. Since R3 keeps every ordering of $P$ in $P'$, it suffices that $a$ and $b$ are ordered in $P$. For example, a host read of $h[i]$ that overlaps a kernel writing $d[i]$ is harmless in $P$ but becomes a race in $P'$.

\item \emph{Pointer.} 
% Making the two pointers equal is not observable. $\mathrm{Uses}(h)$ denotes the operations that use pointer $h$.
% \begin{align*}
% &\mathrm{Uses}(h) \cup \mathrm{Uses}(d) \subseteq \{\text{load, store, copy, kernel arg, free}\}\\
% &\text{and no retained operation requires } h \cap d = \emptyset
% \end{align*}
% This means that $h$ and $d$ serve only as addresses for accessing the buffers, never as values to compare, print, or pass on, and no remaining operation relies on the two buffers being separate.
 In $P'$, $h$ and $d$ become the same pointer $p$, so the program must not be able to tell them apart. Let $\mathrm{Uses}(h)$ be the operations that use $h$ or a pointer derived from it.
\[
\mathrm{Uses}(h) \cup \mathrm{Uses}(d) \subseteq \{\text{load, store, copy, kernel arg, free}\}
\]
Thus $h$ and $d$ serve only as addresses and are never compared (as in \texttt{h == d} in Figure~\ref{fig:pr_decision}), cast to integers, or allowed to escape to memory. In addition, no remaining operation may assume that $h$ and $d$ do not overlap, such as a copy between them at different offsets.

\item \emph{Lifetime.} 
% $p$ exists whenever $h$ or $d$ is accessed. $\Acc(X)$ denotes the events that access a byte of $X$.
% \[
% \forall a \in \Acc(h \cup d):\quad \texttt{alloc}(p) \hbp a \hbp \texttt{free}(p)
% \]
% This requires every access to $h$ or $d$ to happen after $p$ is allocated and before it is freed.
The single buffer $p$ replaces two buffers with different lifetimes, so it must stay allocated from the first access to either buffer until the last. With $\Acc(X)$ denoting the events that access a byte of $X$,
\[
\forall a \in \Acc(h \cup d) \cap \hat{E}:\quad \texttt{alloc}(p) \hbp a \hbp \texttt{free}(p)
\]
In the S4 example of Figure~\ref{fig:pr_decision}, $h$ is freed before the kernel uses $d$, so $p$ must outlive both. S4 fails when R1 and R4 cannot place \texttt{alloc}(p) and \texttt{free}(p) this way, for example when one buffer is allocated inside a conditional branch that does not dominate all its accesses.
\end{enumerate}
\end{definition}

\subsubsection{Soundness}
\label{sec:pr-correctness}
\begin{theorem}[Soundness of unification]
\label{thm:sr-formal}
\label{theorem:correctness}
% If $P$ is race-free, every allocation succeeds, \yanbo{no event of $P$ other than a linking copy reads an uninitialized byte of $h$ or $d$, that is, a byte that no write has reached directly or through linking copies,} and $h \unif d \to p$ is safe, then $\sem{P'} \subseteq \sem{P}$, where $\sem{P}$ is the externally visible behavior of $P$.
Let $P$ be race-free, let every allocation succeed, and let no event of $P$ other than a linking copy read an uninitialized byte of $h$ or $d$. If $h \unif d \to p$ is safe, then $\sem{P'} \subseteq \sem{P}$, where $\sem{P}$ is the externally visible behavior of $P$.
\end{theorem}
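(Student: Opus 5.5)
Our approach is a simulation argument on matching executions: every execution of $P'$ corresponds to an execution of $P$ on the same input and external choices, and along that correspondence every retained event reads the same values. Since $\sem{\cdot}$ counts only I/O and exit status, which are retained events whose behavior is a function of the values they read, equal read values give $\sem{P'} \subseteq \sem{P}$. Concretely, we fix an execution $E'$ of $P'$ and build the matching execution $E$ of $P$ by replaying $E'$. Each retained event of $\hat{E}$ is scheduled at the same point as its image. Each deleted linking copy $c$ is placed where $\mathrm{sync}(c)$ sits. The allocations and frees of $h,d$ are placed at their original program points. We then check that the replay is a legal execution of $P$ and agrees with $E'$ on all values read. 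Race-freedom of $P$ lets us treat each execution's memory behavior as determined by happens-before, so that a read returns the value of the unique $\hb$-latest write to that byte. By (OP), the order among retained events in $E$ is contained in the order among them in $E'$, which is what makes this reasoning transferable between the two executions.

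The central lemma is a value invariant, proved by induction along a linearization of $\hbp$ consistent with the replay. For every retained read $r$ of a byte $y \in h \cup d$ in $E$, the value $P'$ returns at $p[i]$ (for the offset $i$ of $y$) equals the value $P$ returns at $y$. Bytes outside $h \cup d$ are untouched by R1--R4, so for them the claim is immediate from (OP) and the inductive hypothesis on earlier reads. For $y$, let $w'$ be the $\hbp$-latest write to $p[i]$ before $r$ in $E'$. This write is an image of a retained write to $y$ or to $\bar{y}$; S2 guarantees the relevant accesses are ordered, so a unique latest write exists and $P'$ is race-free on $p$.

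We then split into two cases. \emph{Case $w'$ writes $y$ itself:} S4 guarantees $p$ is live, and S2 together with (OP) ensures no intervening write in $P$. Hence $r$ reads the same value in both programs. \emph{Case $w'$ writes $\bar{y}$:} here S1 applied to $w'$ gives $\neg\live(w', y)$, so in $P$ the read $r$ of $y$ must be separated from $w'$ by some write $w$ to $y$. Since $w'$ is the latest write to $p[i]$, this $w$ cannot be a retained write. It must therefore be a linking copy $c$ from $\bar{y}$ to $y$. The copy transfers exactly the value $w'$ wrote, unless a further write to $\bar{y}$ intervenes, which is excluded by maximality of $w'$. So $P$ also reads $w'$'s value. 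The side condition $\bar{x} \notin \rd(e)$ of S1 handles events that both write $x$ and read $\bar{x}$. The uninitialized-read hypothesis covers the base case in which no write precedes $r$: the read is impossible in $P$ except via a linking copy, which has no image in $P'$.

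S3 discharges the remaining discrepancy. Pointer values of $h$ and $d$ are only used as addresses, so no retained event can branch on $h = d$ or on address disjointness, and the control flow of $E$ and $E'$ coincides. This is also what makes the replay well-defined, since both executions take the same branches. S4 and R4 ensure no access in $E'$ touches freed or unallocated memory, and the every-allocation-succeeds hypothesis rules out divergent failure paths.

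The main obstacle we expect is the second case of the value invariant. There we must show that, in $P$, the intervening write that kills liveness is precisely a linking copy carrying $w'$'s value. This requires relating the $\hb$-latest writes in $P$ to $\hbp$-latest writes in $P'$ when only the inclusion (OP) is available, not equality of orders. We would first try to strengthen the induction hypothesis to state that, at every point of the linearization, $p[i]$ in $P'$ equals whichever of $h[i]$ or $d[i]$ is live in $P$. We would then check that this invariant is preserved by each event kind: retained writes by S1, linking copies trivially because they make $h[i] = d[i]$, and reads by S2.
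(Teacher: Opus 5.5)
Your closing invariant, that along a linearization every initialized byte of $\{h[i], d[i]\}$ still to be read before being rewritten holds in $P$ the value $p[i]$ holds in $P'$, is essentially the paper's Step~3 invariant. It should be your main argument, not a fallback, because the latest-write case split you lead with is wrong as stated.

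\textbf{The case split.} When $w'$ writes $y$ itself, the claim that S2 and (OP) ensure no intervening write in $P$ fails on exactly the idiom the paper targets. Take: host write to $h[i]$, H2D copy, kernel read of $d[i]$, D2H copy, host read of $h[i]$. In $P'$ the host write is the latest write to $p[i]$. In $P$, however, the D2H linking copy writes $h[i]$ in between. Maximality of $w'$ and (OP) exclude only \emph{retained} intervening writes. Your $\bar{y}$ case has the same flaw, since a linking copy $y \to \bar{y}$ can sit between $w'$ and $c$. The values still agree, but only through an induction over the chain of linking copies at offset $i$, and the walk-along-an-order invariant is what supplies it. A linking copy preserves that invariant for a specific reason, not trivially. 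The copy itself reads its source, so the source is live and $p[i]$ already holds its value. An uninitialized source leaves the destination uninitialized, so nothing is owed.

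\textbf{The order.} Placing each deleted copy ``where $\mathrm{sync}(c)$ sits'' is not well defined, because $\mathrm{sync}(c)$ is empty when other events already supply the ordering. Even when it is present, R3 only guarantees $a \hbp b$ for $a \hb c \hb b$, not $a \hbp \mathrm{sync}(c) \hbp b$. So this placement need not respect $\hb$. The paper instead does the following:
\begin{enumerate}
\item take any linearization $\tau'$ of $\hbp$;
\item drop the $\mathrm{sync}(c)$ events and the allocation and free of $p$;
\item insert each non-retained event right after its last $\hb$-predecessor.
\end{enumerate}
(OP) plus transitivity of $\hb$ then makes the resulting order consistent with $\hb$.

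\textbf{Connecting S1 to the linear walk.} S1 is phrased via $\hb$, while your invariant is maintained along a linear order, so you need the paper's Step~2 lemma. If a read $r$ of $x$ follows $e$ in that order with no write to $x$ in between, then $r \not\hb e$ and no $w$ satisfies $e \hb w \hb r$, hence $\live(e,x)$. Contrapositively, S1 guarantees the overwritten partner value is never read before being rewritten. You already have the ingredients for this in your $\bar{y}$ case.

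\textbf{The role of S2.} In the invariant check, S2 does not handle reads, since reads change no state. Its role is race-freedom of $P'$ (your Step~1 analogue), together with race-freedom of $P$ and (OP) for all other conflicting pairs.
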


% Compare matching executions on the same input and external choices. At each retained event, maintain the invariant that $p[i]$ contains any live, defined value of $h[i]$ or $d[i]$ that a later read can observe. Initially neither fresh allocation has a defined value. At an ordinary write, S1 says the partner's previous value is dead, so writing $p[i]$ loses no needed value. At a linking copy, the source is read in $P$ and is already represented in $p[i]$; assigning it to the destination in $P$ leaves the invariant true without moving data in $P'$.

% R3 and S2 preserve the order and completion required by retained reads and writes, and they prevent new conflicting accesses to $p[i]$. S3 makes the changed pointer identity unobservable to retained operations. S4 keeps $p$ initialized and alive for every use. Thus each defined read yields the same value. Branches and application-visible outputs derived from those reads agree. Induction over retained events establishes the projected behavior relation. The same invariant applies to disjoint pairs, so their rewrites compose.

\begin{proof}[Proof sketch]
% \yanbo{Fix an execution of $P'$. Two conflicting accesses in $P'$ either conflict in $P$, where race freedom orders them, or touch $h[i]$ and $d[i]$, where S2 orders them. (OP) in \S\ref{sec:pr-transform} carries both orders to $P'$, so $P'$ is race-free and can be read along any order consistent with $\hbp$. Along such an order we maintain that $p[i]$ holds every value of $h[i]$ or $d[i]$ that will still be read. A write to $x$ preserves this because, by S1, the value of $\bar{x}$ it overwrites is dead. A linking copy preserves it because its source value is already in $p[i]$. S3 makes the shared address unobservable, and S4 keeps $p$ allocated for every access. Every read therefore returns the same value in $P$ and $P'$, so both programs take the same branches and make the same external calls. The full proof is in the supplementary appendix.}
Fix an execution of $P'$. We build a matching execution of $P$ that follows the same order and reinserts the deleted linking copies, and show that every read returns the same value in both.
Two conflicting accesses in $P'$ either conflict in $P$, where race freedom orders them, or touch $h[i]$ and $d[i]$, where S2 orders them. (OP) in \S\ref{sec:pr-transform} carries both orders to $P'$, so $P'$ is race-free and its reads do not depend on how unordered events interleave.
Along this order, whenever $h[i]$ or $d[i]$ will still be read in $P$, $p[i]$ holds its value in $P'$.
A write to $x$ preserves this because, by S1, the value of $\bar{x}$ it overwrites is dead. A linking copy, which runs only in $P$, preserves it because its source value is already in $p[i]$.
S3 makes the shared address unobservable, and S4 keeps $p$ allocated for every access.
Every read therefore returns the same value in $P$ and $P'$, so both programs take the same branches and make the same external calls, which gives $\sem{P'} \subseteq \sem{P}$.
The full proof is in the supplementary appendix.
\end{proof}

% \paragraph{Analysis premise (coverage).}
% S1--S4 quantify over \emph{all} accesses to $h$ and $d$. \pr{} can only check the accesses it sees, so its checks are sound only if it sees all of them, including accesses through derived pointers. With $\mathrm{Vis}_{\pr}(h,d)$ the accesses that \pr{} attributes to $h$ or $d$:
% \[
% \textbf{(C)}\qquad \Acc(h \cup d) \subseteq \mathrm{Vis}_{\pr}(h, d)
% \]
% Under (C), a pair that passes \pr{}'s checks satisfies S1--S4, and Theorem~\ref{thm:sr-formal} applies.

\subsubsection{Static Analysis and Rewriting}
% \pr{} implements the admitted transformation on LLVM IR. It checks the candidate structure and supported pointer uses, then redirects accepted uses to one managed allocation. The source-level guarantee applies only when the checks above cover the pair. It also applies a cost filter to discard pairs whose copy size falls below a threshold $\theta_{\min}$, as their copy overhead is negligible.
Theorem~\ref{thm:sr-formal} requires S1--S4 over all accesses to $h$ and $d$, but \pr{} can check only the accesses its analysis attributes to them, $\mathrm{Vis}_{\pr}(h,d)$. Its checks therefore imply S1--S4 only under the coverage premise
\[
\textbf{(C)}\qquad \Acc(h \cup d) \subseteq \mathrm{Vis}_{\pr}(h, d).
\]
\pr{} secures (C) conservatively. S3 rules out pointer uses that the analysis cannot follow, and \pr{} declines a pair whose pointers flow across compilation units.

\pr{} runs these checks and the rewrite as a pass on LLVM IR. For each candidate pair, it checks S1--S4 on the accesses in $\mathrm{Vis}_{\pr}(h,d)$ and declines the pair whenever a condition cannot be decided statically, for example because of a dynamic size or complex control flow. It rewrites each accepted pair by R1--R4 (Definition~\ref{def:sr-unification}), so Theorem~\ref{thm:sr-formal} applies to every pair that \pr{} unifies. \pr{} also skips pairs smaller than a threshold $\theta_{\min}$, because their copies cost too little to matter.

\subsection{\tr{}: Binary-Level Pattern-Guided Optimization}
\label{sec:tr}

\begin{figure*}[t]
  \centering
  \includegraphics[width=\textwidth]{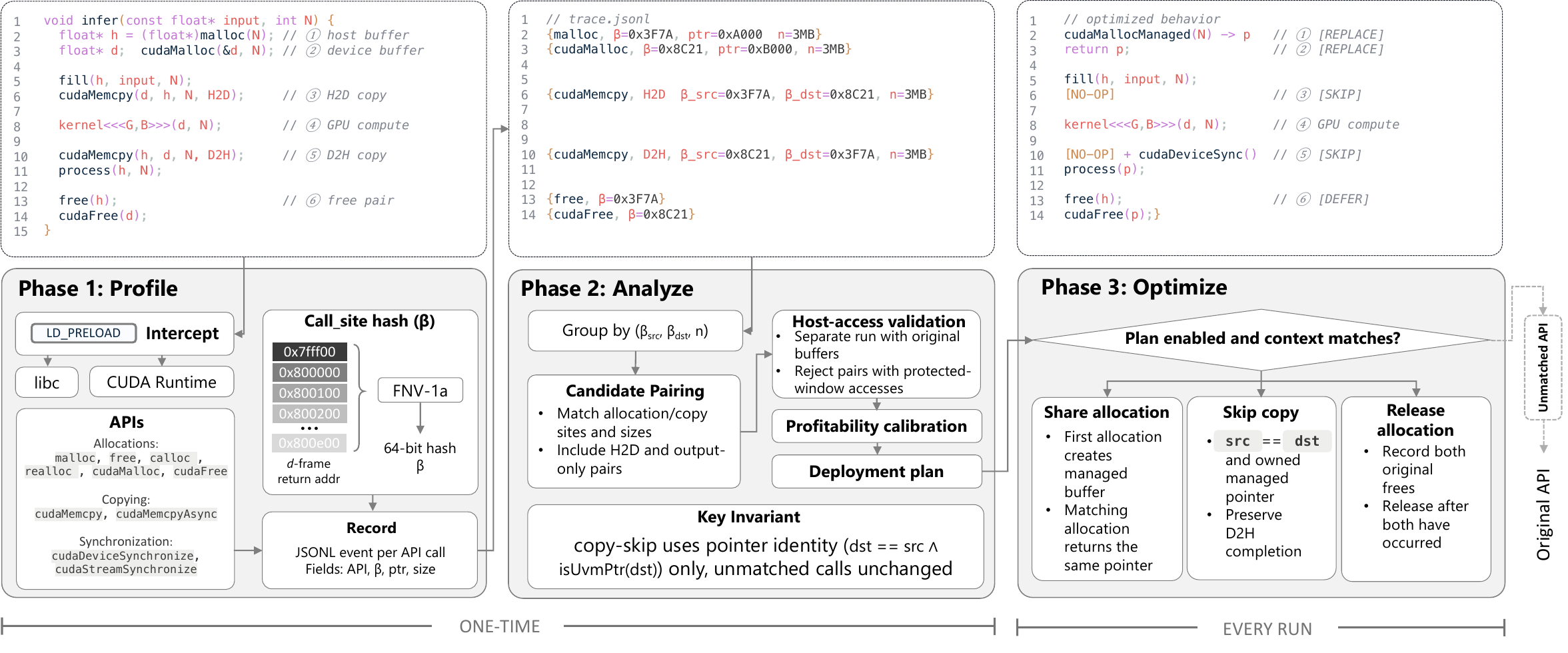}
  \caption{\tr{}'s three-phase pipeline. Profile discovers recurring allocation and copy events; Analyze forms candidate pairs, validates their host-access windows in a separate run, and calibrates the resulting plan; Optimize applies the enabled plan in its calibrated deployment context. Preparation includes discovery, analysis, page validation, and three baseline/optimized calibration pairs.}
  \label{fig:tr_pipeline}
\end{figure*}

In many production deployments, source code is unavailable or the critical copy paths are buried inside precompiled libraries such as TensorRT.
\tr{} applies the unification $h \unif d \to p$ at runtime by interposing on libc and CUDA runtime calls in the unmodified binary.
We describe which safety conditions \tr{} checks and which it assumes (\S\ref{sec:tr-contract}), then its three-phase pipeline (\S\ref{sec:tr-pipeline}).
\subsubsection{Checked Conditions and Assumptions}
\label{sec:tr-contract}

% \tr{} performs the unification $h \unif d \to p$ of \S\ref{sec:sr-formal} at runtime. It observes API calls but not loads and stores, so it cannot establish S1--S4 for every execution as \pr{} does. It checks the host side of S1 and S2 on profiled executions (Phase~2) and assumes the rest: S1 for device writes over a live host value, S2 for copy and kernel regions, which must be serialized by the default stream or explicit synchronization, and S3. \tr{} therefore targets repeated executions whose allocation, access, and synchronization patterns match the profile.
\tr{} applies the unification $h \unif d \to p$ (Definition~\ref{def:sr-unification}) at runtime. However, unlike \pr{}, \tr{} cannot establish S1--S4 for every execution. It sees API calls but not memory loads and stores. \tr{} therefore checks what it can observe and assumes the rest. Phase~2 checks, on profiled executions, that the host does not touch $h$ while the GPU uses the pair (\S\ref{sec:tr-analyze}), and Phase~3 keeps $p$ alive until both original frees (S4).
\tr{} assumes the other conditions, for example that the GPU never overwrites a host value that is still needed (S1) and that the program never compares $h$ and $d$ (S3).
Its guarantee therefore holds only for later runs that behave like the profiled ones.

% \subsubsection{Three-Phase Pipeline}
\subsubsection{Profile-Guided Unification}
\label{sec:tr-pipeline}

% Figure~\ref{fig:tr_pipeline} groups preparation and deployment into three phases.
\tr{} selects the pairs to unify by profiling the unmodified binary, and unifies them in later runs (Figure~\ref{fig:tr_pipeline}). Phases 1 and 2 run once to produce a unification plan, and Phase 3 applies this plan to each later run.

\paragraph{Phase 1: Profile.}
\label{sec:tr-profile}
Using \texttt{LD\_PRELOAD}, \tr{} intercepts libc and CUDA runtime APIs for allocation, copy, synchronization, and deallocation. For each allocation or copy, it records the size $N$ and a call-site identity $\beta$, a 64-bit FNV-1a hash of the top $d$ return addresses on the call stack (e.g., $d{=}16$). We write $\beta_h$ for the host allocation site, $\beta_d$ for the device allocation site, and $\beta_c$ for the copy site. Multiple stack frames distinguish sites behind common wrapper functions (\S\ref{sec:tr-invariants}). A size threshold $\theta_{\min}$ (e.g., 200\,KB) excludes small buffers. This run keeps the original allocations and copies. The runtime writes all recorded events to a trace file,
which Phase 2 consumes as input.

\paragraph{Phase 2: Analyze.}
\label{sec:tr-analyze}

Phase 2 turns the trace into a unification plan. It selects the pairs to unify, records how to keep the orderings their copies provide, rejects pairs that the host touches while the GPU uses them, and enables the plan only if it reduces latency.

\vspace{0.25\baselineskip}
\emph{Pairing.} The analyzer groups copies by $(\beta_h, \beta_d, N)$ and selects pairs that recur at least $k_{\min}$ times (e.g., $k_{\min}{=}2$) at the dominant steady-state copy site. Each selected pair $h \unif d$ is linked by copies in $\Lk$. 
% Unlike Definition~\ref{def:sr-candidate}, \tr{} also accepts pairs linked only by D2H copies, such as output buffers.

\vspace{0.25\baselineskip}
\emph{Orderings.} 
% For each synchronous D2H copy $c \in \Lk$, the plan records the synchronization that precedes $c$. Phase~3 reuses it as $\mathrm{sync}(c)$ only if it completes the last GPU work that $c$ waits for and no such work is submitted before $c$. Otherwise $\mathrm{sync}(c)$ is \texttt{cudaDeviceSynchronize}. This keeps (OP) in \S\ref{sec:pr-transform}.
% \YHcomment{The current analyzer finds an earlier sync before another copy; it does not establish that the sync covers the last relevant kernel or detect every later GPU submission. The no-extra-wait case is therefore restricted by the deployment contract, rather than justified by this heuristic alone.}
A synchronous D2H copy does two things. It moves data, and it makes the host wait until the GPU work that produces the data is done. Skipping the copy removes the first but must keep the second, which is the role of $\mathrm{sync}(c)$ in R3. By default, Phase~3 replaces the copy with \texttt{cudaDeviceSynchronize}. If the program already waits for that GPU work before the copy, for example through an earlier synchronization with no kernel launched in between, the plan records it and Phase~3 adds no wait. In both cases the host reads the result only after the GPU has produced it, as (OP) in \S\ref{sec:pr-transform} requires.

% \emph{Validation.} A separate run keeps $h$ and $d$ distinct and checks the host side of S1 and S2. After a matched H2D copy, \tr{} marks the pages of $h$ inaccessible with \texttt{PROT\_NONE}, and the same upload opens a window for output-only pairs in that GPU-use region. A synchronous D2H copy closes the window. An asynchronous one keeps it open until a traced synchronization completes the transfer. Input-only windows close at a synchronization or at the D2H boundary. Any host access to a protected page, or an unsupported or incomplete window, rejects the pair. Page protection is enforced by hardware, so it also catches accesses through aliases of $h$. This is the counterpart of (C) for host accesses inside a window.
\vspace{0.25\baselineskip}
\emph{Validation.} A separate run keeps $h$ and $d$ distinct and checks that the host does not touch $h$ while the GPU uses the pair, which is the host side of S1 and S2. \pr{} discharges these conditions statically, but \tr{} sees only API calls, not memory loads and stores. It therefore checks them at runtime with a page-protection trick delimited by API calls. The API calls mark the interval to check, which we call the pair's \emph{GPU-use window}, and page protection catches any host access inside it. A window normally opens at the pair's H2D copy and closes at its D2H copy. Some pairs have only one of them, for example an input buffer that the GPU only reads or an output buffer that the GPU only writes. For these pairs, \tr{} takes the missing boundary from the round of GPU work they belong to, which starts with an H2D copy and ends with a D2H copy or a synchronization. An asynchronous D2H copy closes a window only when a later synchronization completes it. While a window is open, \tr{} marks the pages of $h$ inaccessible with \texttt{PROT\_NONE}, so any host access to them traps. A trap means that the host touched $h$ while $h$ and $d$ could hold different values, which may violate S1 or S2, so \tr{} rejects the pair. An unsupported or incomplete window also rejects the pair. Page protection is enforced by hardware, so it also catches accesses through aliases of $h$. This is the counterpart of (C) for host accesses inside a window.

\vspace{0.25\baselineskip}
\emph{Calibration.} %\tr{} runs three baseline and optimized pairs in independent processes and enables the plan only when the median reduction in complete-loop latency is positive. This is a profitability check, not a safety condition. Discovery, analysis, validation, and calibration form the one-time preparation cost, and the plan is reused for later runs in the same configuration.
A safe plan is not always faster, because the unified buffer and any added synchronization can cost more than the copies they remove. \tr{} therefore performs three trials, each with one run without the plan and one run with it in separate processes, and enables the plan only if the median reduction in complete-loop latency is positive. This is a profitability check, not a safety condition. \tr{} also records the configuration of these runs, which Phase 3 checks before applying the plan.

\paragraph{Phase 3: Optimize.}
\label{sec:tr-optimize}
Before any unification, \tr{} compares the executable and interposer hashes, hostname, arguments, GPU, and CUDA runtime version with the calibration context. A disabled plan or a mismatch runs $P$ unchanged. Otherwise the runtime applies R1--R4 (Definition~\ref{def:sr-unification}) to each selected pair.

\vspace{0.25\baselineskip}
\emph{Allocation replacement (R1, R2).} The first allocation that matches $(\beta, N)$ creates $p$ with \texttt{cudaMallocManaged}, and the partner allocation returns the same $p$, in either order. Every use of $h$ and $d$ then refers to $p$ without rewriting code.

\vspace{0.25\baselineskip}
\emph{Copy skipping (R3).} When \texttt{cudaMemcpy(dst, src, N)} is intercepted, \tr{} checks if \texttt{dst\,==\,src\,==\,p}. If so, the copy is a self-copy and is skipped. A skipped copy that provides an ordering is replaced by $\mathrm{sync}(c)$ (e.g., \texttt{cudaDeviceSynchronize}) from Phase 2.

\vspace{0.25\baselineskip}
\emph{Deallocation (R4).} \tr{} records \texttt{free(h)} and \texttt{cudaFree(d)} separately and releases $p$ after both, in either order. This keeps the unified buffer valid throughout both original lifetimes.

% Calls that match no pair use the original APIs. 
\vspace{0.5\baselineskip}
Any API call whose call-site hash and size do not match a planned triple falls back to its original APIs. 
% A selected pair stays unified for its whole lifetime.

% \input{sections/4-implementation}
% §5 Evaluation
\section{Evaluation}
\label{sec:evaluation}

% We evaluate \care{} as a deployable systems tool rather than only as a peak-speedup optimization.
% The evaluation answers four questions.
% First, when a workload is in scope, does allocation coalescing remove the intended copy overhead?
% Second, does benefit scale with the baseline copy fraction rather than with benchmark choice?
% Third, what does the binary path provide beyond source rewriting?
% Fourth, what boundaries appear when profiles, APIs, or managed-memory behavior fall outside the expected envelope?

\subsection{Experimental Setup}
\label{sec:eval:setup}

\subsubsection{Platform}

Our primary platform is the Jetson AGX Orin Developer Kit (Ampere GPU, SM~8.7, 64\,GB LPDDR5, CUDA~12.6, TensorRT~10.3~\cite{tensorrt}).
We also evaluate on two other Jetson platforms for cross-platform validation:
Orin Nano Super (Ampere, SM~8.7, 8\,GB, CUDA~12.6) and
AGX Thor Developer Kit (Blackwell, SM~11.0, 128\,GB, CUDA~13.0, pre-release JetPack~7.0-b128).
All experiments use the maximum performance mode available on each platform (\texttt{nvpmodel -m 0} on Orin and Thor; \texttt{-m 2} MAXN\_SUPER on Nano) with \texttt{jetson\_clocks} and 60-second thermal stabilization before measurement.

\subsubsection{Implementation}\label{sec:implementation}

\pr{} is implemented as an LLVM pass plugin. 
% Engineering challenges specific to IR-level analysis, including template inlining for cross-function pairs and pointer-provenance recovery through struct fields, are described in Appendix~\ref{sec:appendix-pr-impl}.
\tr{} is a single shared library injected via \texttt{LD\_PRELOAD}, which forwards each intercepted call to the real implementation via \texttt{dlsym(...)}.
However, two platform-specific issues required workarounds on Jetson.
First, glibc's \texttt{backtrace()} internally calls \texttt{malloc}, causing infinite recursion inside the \texttt{malloc} interceptor. Hence, we use a manual frame-pointer walk instead. Second, the first \texttt{cudaMallocManaged} call triggers lazy CUDA initialization, which re-enters the intercepted allocation path. A thread-local depth counter prevents deadlock. 
% Additional invariants (shutdown ordering, two-phase initialization) are described in Appendix~\ref{sec:appendix-invariants}.
\label{sec:tr-invariants}

\subsubsection{Benchmarks}

Table~\ref{tab:benchmarks} lists the seven benchmarks and the negative control, organized into three tiers.

\begin{itemize}[leftmargin=1.5em, itemsep=0pt]
\item \textbf{Micro-benchmarks} (\bench{M1}--\bench{M3}) isolate a single CUDA library kernel (cuBLAS, cuFFT, cuDNN) with controlled buffer sizes, following the allocation pattern used by NVIDIA's TensorRT \texttt{BufferManager} class~\cite{trtsamples}.

\item \textbf{Canonical pipelines} (\bench{C1}--\bench{C2}) run TensorRT inference (FCN-ResNet50 and FCN-ResNet101) with preprocessing, H2D, inference, D2H, and postprocessing stages.

% \item \textbf{Application workloads} (\bench{E1}--\bench{E2}) use third-party YOLOv5-TensorRT~\cite{yolov5trt} detection and DeepSORT-TensorRT~\cite{deepsorttrt} feature-extraction code. \YHadd{We evaluate their existing application binaries with \tr{} without recompilation. The selected H2D and D2H copies are API-visible calls in the third-party code. \bench{E1} also has a separately compiled manual reference. We do not evaluate \pr{} on these two workloads.}

\item \textbf{Application workloads} (\bench{E1}--\bench{E2}) use third-party YOLOv5-TensorRT~\cite{yolov5trt} detection and DeepSORT-TensorRT~\cite{deepsorttrt} feature-extraction code. Although their source code is public, we emulate a closed-source deployment by running the prebuilt binaries distributed by each project as-is. We do not evaluate \pr{} on these two workloads because we treat them as binary-only.

\end{itemize}
We also include a \textit{negative control} (\bench{NC}) that already uses \texttt{cudaMallocManaged} to verify that \tr{} finds no candidate pairs when no redundant copies exist.

% \begin{table}[t]
% \centering\footnotesize
% \caption{Benchmark suite. Pairs: number of candidate allocation pairs identified by the analyzer. Per-platform $\fcopy$ values are listed in Appendix Table~\ref{tab:fcopy_xplat}. Not all copy directions within a pair are necessarily optimized; for example, \bench{M2}'s D2H copy is rejected for safety (\S\ref{sec:eval:safety}).}
% \label{tab:benchmarks}
% \begin{tabular}{@{}lllr@{}}
% \toprule
% \textbf{ID} & \textbf{Workload} & \textbf{Category} & \textbf{Pairs} \\
% \midrule
% \bench{M1} & cuBLAS SGEMM ($2048^2$) & micro & 2 \\
% \bench{M2} & cuFFT OFDM ($2048\times1024$) & micro & 2 \\
% \bench{M3} & cuDNN Conv (ResNet-50 layer 1) & micro & 2 \\
% \bench{C1} & FCN-ResNet50 inference & canonical & 2 \\
% \bench{C2} & FCN-ResNet101 inference & canonical & 2 \\
% \bench{E1} & YOLOv5 object detection & real-world & 1 \\
% \bench{E2} & DeepSORT tracking & real-world & 2 \\
% \bench{NC} & C1 with manually optimized & negative & 0 \\
% \bottomrule
% \end{tabular}
% \end{table}

\begin{table}[t]
\centering\footnotesize
\caption{Benchmark suite. The Pairs column counts candidate allocation rules on Orin before profitability selection. Per-platform $\fcopy$ values appear in Figure~\ref{fig:amdahl} and Appendix Table~\ref{tab:fcopy_xplat}. }

\label{tab:benchmarks}
\begin{tabular}{@{}lllrr@{}}
\toprule
\textbf{ID} & \textbf{Workload} & \textbf{Category} & \textbf{Pairs} & \textbf{Opt.} \\
\midrule
\bench{M1} & cuBLAS SGEMM ($2048^2$) & micro & 2 & \pr{} \& \tr{} \\
\bench{M2} & cuFFT OFDM ($2048\times1024$) & micro & 2 & \pr{} \& \tr{}\\
\bench{M3} & cuDNN Conv (ResNet-50 layer 1) & micro & 2 &  \pr{} \& \tr{}\\
\bench{C1} & FCN-ResNet50 inference & canonical & 2 & \pr{} \& \tr{} \\
\bench{C2} & FCN-ResNet101 inference & canonical & 2 &  \pr{} \& \tr{}\\
% \bench{E1} & YOLOv5 object detection & end-to-end & 2 & \tr{} \\
\bench{E1} & YOLOv5 object detection & application & 2 & \tr{} \\
\bench{E2} & DeepSORT feature extraction & application & 2 & \tr{} \\
\bench{NC} & \bench{C1} with manually optimized & negative & 0 &  \tr{}\\
\bottomrule
\end{tabular}
\end{table}

\begin{figure}[t]
\centering
\includegraphics[width=\columnwidth]{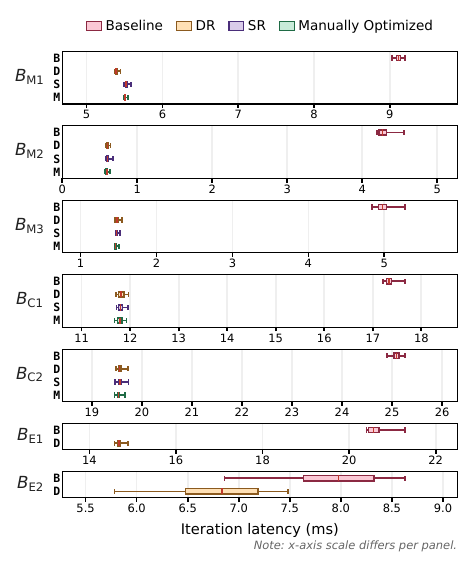}
\vspace{-0.3cm}
% \bench{E1}--\bench{E2} show only Baseline and \tr{}.}
\caption{Primary-region latency on Jetson AGX Orin. Box: mean$\pm$1$\sigma$. Whisker: min to p99. Red line: mean. Appendix Table~\ref{tab:tr_orin} reports the mean of the five per-run p99 values.
Each benchmark has its own x-axis scale. \bench{E1} and \bench{E2} show Baseline and \tr{} as source code is unavailable.}
\label{fig:results_orin}
\end{figure}

\subsubsection{Methodology}
\label{sec:eval:methodology}
We compare up to four methods, depending on the benchmark.
(a) The \emph{baseline} is the original unmodified program using the standard \texttt{malloc}+\texttt{cudaMalloc}+ \allowdisplaybreaks \texttt{cudaMemcpy} pattern.
(b) \tr{} and (c) \pr{} are the two \care{} approaches under evaluation.
For the source-available benchmarks (\bench{M1}--\bench{M3}, \bench{C1}--\bench{C2}), we also include a (d) \emph{manually optimized} variant as a performance \emph{upper bound} for what the automatic transformation can achieve.

Each method is measured over 5 independent process invocations (separate OS-level program launches, not repeated calls within a single process), each executing 100 warmup iterations followed by 100 timed iterations.
Each process reports its mean latency over the timed iterations.
We compute the mean and standard deviation across the 5 per-process means.
Some CUDA library kernels are not guaranteed to be bit-exact across process launches because implementation choices and parallel reduction orders may vary. We therefore compare each benchmark's reported scalar L2 checksum with its same-run baseline using a relative tolerance of $10^{-5}$. All selected runs across methods and platforms pass this smoke check. It is not an element-wise proof of arbitrary outputs.

\subsubsection{Metrics}
\label{sec:eval:metrics}

% Let $t_\text{method}$ denote the mean latency of the benchmark's declared primary region; this is the common scope used by the latency, Amdahl, and primary-region energy results below. We use three metrics throughout the evaluation:
% Let $t_\text{method}$ denote the mean latency of the benchmark's primary timing region.
The \emph{primary timing region} is the latency scope each benchmark uses to compare methods. It includes the benchmark's host-device transfers, GPU computation, synchronization, and host-processing work, measured either as one wall-clock interval or as the sum of timed phases. The same scope is used for every method within a benchmark. Let $t_\text{method}$ denote its mean latency.
% \yoon{briefly explain what the 'primary timing region' is.} 
We use three metrics throughout the evaluation:
\begin{itemize}[leftmargin=1.5em, itemsep=0pt]
    \item \emph{Speedup} = $t_\text{baseline} / t_\text{method}$, which represents the primary-region performance improvement achieved by \care{}'s transformation (and the manual optimization).
    \item \emph{Copy fraction} ($\fcopy$) is baseline GPU copy-only time divided by baseline complete-iteration wall time, both measured in the same Nsight Systems~\cite{nsys} capture. It excludes CPU-only portions of copy calls and intervals where a copy overlaps a kernel. The Amdahl approximation $S = 1/(1 - \fcopy)$ provides an upper bound reference on the speedup achievable by copy elimination.
    \item \emph{Recovered fraction} = $(t_\text{baseline} - t_\text{\care{}}) / (t_\text{baseline} - t_\text{manual})$, where $t_\text{\care{}}$ is the latency of either \tr{} or \pr{}, and $t_\text{manual}$ is the latency of the manually optimized variant. This measures how close \care{}'s automatic transformation comes to the manual upper bound. 
    % \YHadd{Measurement variation can produce values slightly above 100\%.}\yoon{don't need to say this here.}
\end{itemize}
% $\fcopy$ provides a first-order Amdahl approximation of achievable speedup, $S_\text{Amdahl} = 1/(1 - \fcopy)$. When not all copies fall within \care{}'s interception scope, the observed speedup falls below this curve.

\subsection{Per-Benchmark Results}
\label{sec:eval:tr}
\label{sec:eval:pr}
Figure~\ref{fig:results_orin} shows primary-region latency on AGX Orin for all seven benchmarks. Full numerical results in Appendix Table~\ref{tab:tr_orin}.
% The main trend is that \care{} recovers the copy portion of execution when that portion is visible and within the relevant optimization contract.
% Large speedups appear only when copies dominate baseline runtime; end-to-end gains are smaller but still useful when copies are a smaller or partially hidden part of the application.
The main trend is that \care{} improves performance by removing the copy time that it can safely observe and optimize. Therefore, workloads with copy-dominated baselines see the largest speedups, while workloads with small copy fraction see smaller but still useful gains. % when copies are a smaller or partially hidden part of the application.

\paragraph{Micro-benchmarks (\bench{M1}--\bench{M3}).}
\label{sec:eval:tr:micro}
These benchmarks isolate the impact of copy elimination on individual CUDA kernels. \bench{M2} (cuFFT) achieves a 7.05$\times$ speedup because its baseline primary region is heavily dominated by data movement ($\fcopy = 85.3\%$). Eliminating redundant copies only leaves the FFT computation and synchronization overhead.
As the compute workload increases in \bench{M3} (cuDNN) and \bench{M1} (cuBLAS), the speedups naturally scale down to 3.37$\times$ and 1.69$\times$, closely tracking the available copy fraction.
Across all three, both \tr{} and \pr{} match the performance of hand-optimized code to within measurement noise ($\geq$99\% recovery).

\paragraph{Canonical inference pipelines (\bench{C1}--\bench{C2}).}
\label{sec:eval:tr:canonical}
In these multi-stage inference pipelines, copies account for a smaller fraction of total runtime, so the speedup is more modest: \bench{C1} (FCN-ResNet50) yields a 1.47$\times$ speedup, whereas the deeper \bench{C2} (FCN-ResNet101) achieves 1.28$\times$. 
Because ResNet101 requires more GPU execution time per frame, its data-transfer overhead accounts for a proportionally smaller slice of the iteration (21\% vs.\ 30\%). 
Importantly, \care{} scales to these multi-stage pipelines without degrading performance relative to manual optimization. The latency gap between \care{} and the manual baseline is less than 0.5\%.

\paragraph{End-to-end applications (\bench{E1}--\bench{E2}).}
\label{sec:eval:tr:realapp}
% These benchmarks isolate what the guarded binary path buys over source rewriting: 
These benchmarks show the benefit of \tr{} 
% in settings where the evaluated deployment remains binary-only.
on application binaries without recompilation.
% Both retain unmodified third-party inference code and therefore exclude \pr{}; for \bench{E1}, we additionally built an independent source-level manual reference for the API-visible allocation-coalescing transformation, while \bench{E2} has no such reference.

\bench{E1} (YOLOv5) achieves a 1.40$\times$ primary-region speedup on Orin. \tr{} unifies two API-visible host/device allocation pairs, covering the input H2D and output D2H copies. In the profiled iteration, GPU copy time falls by 5.91\,ms (Figure~\ref{fig:time_breakdown_orin}). The resulting latency recovers 98.4\% of the gain achieved by an independent manual implementation of the same transformation.

\bench{E2} (DeepSORT feature extraction) achieves a 1.17$\times$ speedup on Orin, close to the 1.16$\times$ copy-removal reference implied by its 13.7\% baseline copy fraction. Copies account for a smaller share of this workload than \bench{E1}, so eliminating them leaves most of its execution time intact.

\subsubsection{Execution-Time Breakdown}
\label{sec:eval:breakdown}

\begin{figure}[t]
\centering
\includegraphics[width=\columnwidth]{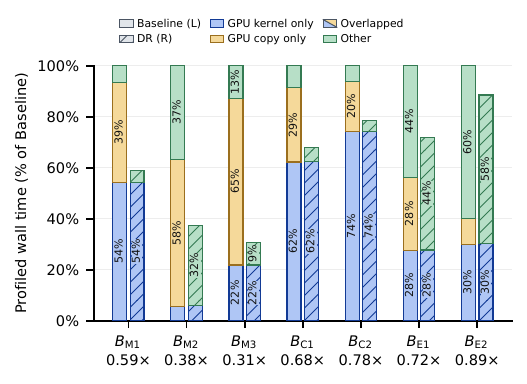}
\caption{Complete-iteration wall time on AGX Orin, divided into GPU kernel-only, GPU copy-only, overlapped, and Other intervals. Other covers time with no GPU kernel or copy running. In each pair, the left bar is Baseline and the hatched right bar is \tr{}, both normalized to Baseline's total. The multiplier below each pair is \tr{}'s total relative to Baseline (lower is better). \tr{} removes the copy-only intervals and leaves kernel time unchanged, so the remaining kernel and Other intervals limit the gain.} 
\label{fig:time_breakdown_orin}
\end{figure}

The primary-region speedups above show how much \care{} reduces latency, but not where the saved time comes from. To answer this, Figure~\ref{fig:time_breakdown_orin} divides the complete-iteration wall time of Baseline and \tr{} into GPU kernel-only, GPU copy-only, overlapping kernel-and-copy, and Other intervals. This breakdown lets us check that the gain comes from the removed copies rather than from changes in kernel execution, and shows which remaining portion limits the gain over the complete iteration. Each bar averages 20 iterations from one Nsight Systems capture per condition, and Baseline is normalized to 100\% within each benchmark.

Across the Orin workloads, \tr{} removes the GPU copy-only intervals while leaving kernel time nearly unchanged. The remaining intervals therefore determine the gain over the complete iteration. The copy-heavy \bench{M3} and \bench{M2} gain the most, whereas kernel time limits \bench{C1} and \bench{C2}, and Other time limits \bench{E1} and \bench{E2}. Other time also explains why \bench{M2}'s 7.05$\times$ primary-region speedup shrinks over the complete iteration, because most of its remaining time is spent in Other intervals.

% \yoon{Is this subsubsection incomplete? I think more explanation is needed about the new bar graph.}

\subsubsection{Cross-Platform Validation}
\label{sec:eval:amdahl}
\begin{figure}[t]
\centering
\includegraphics[width=0.95\columnwidth]{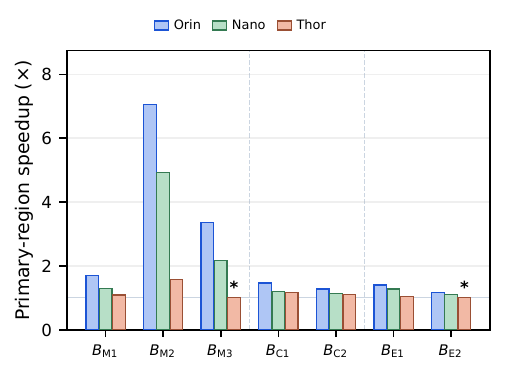}
\vspace{-0.2cm}

\caption{\tr{} speedup across platforms and benchmarks. $\ast$ denotes \bench{M3} and \bench{E2} use the baseline fallback selected by the profitability guard.}
\label{fig:speedup_xplat}
\end{figure}

% Figure~\ref{fig:speedup_xplat} shows \tr{} speedup across all three platforms.
% AGX Orin shows the highest gains because its stronger GPU completes compute faster, leaving copies as a larger fraction of iteration time.
% Nano shows uniformly lower speedup due to its lower memory bandwidth, which reduces $\fcopy$.
% Thor achieves sub-millisecond iteration times on most benchmarks; \bench{M2} and \bench{M3} are affected by a Blackwell-specific managed-memory overhead (Appendix~\ref{sec:blackwell-l2}).

Figure~\ref{fig:speedup_xplat} shows that the same transformation produces different speedups across platforms. Differences in hardware and software stacks change how each workload's time is divided among copies, GPU computation, and other work. Consequently, the portion affected by copy elimination varies.

% \subsubsection{Cross-Platform Amdahl Model}

\care{} accelerates only the memory copy portion of the pipeline, so its speedup is bounded by Amdahl's law: $S_\text{Amdahl} = 1/(1 - \fcopy)$. Figure~\ref{fig:amdahl} plots $\fcopy$ against observed speedup for all benchmarks across three platforms (21 data points), with this curve overlaid. The strong fit ($R^2=0.989$) shows that copy elimination, not benchmark identity, explains most of the observed benefit. The abnormal behavior of \bench{M3} and \bench{E2} on Thor is discussed in Appendix \S\ref{sec:blackwell-l2}.

% We report this raw enabled-plan result to expose the platform boundary. 

\begin{figure}[t]
\centering
\includegraphics[width=0.85\columnwidth]{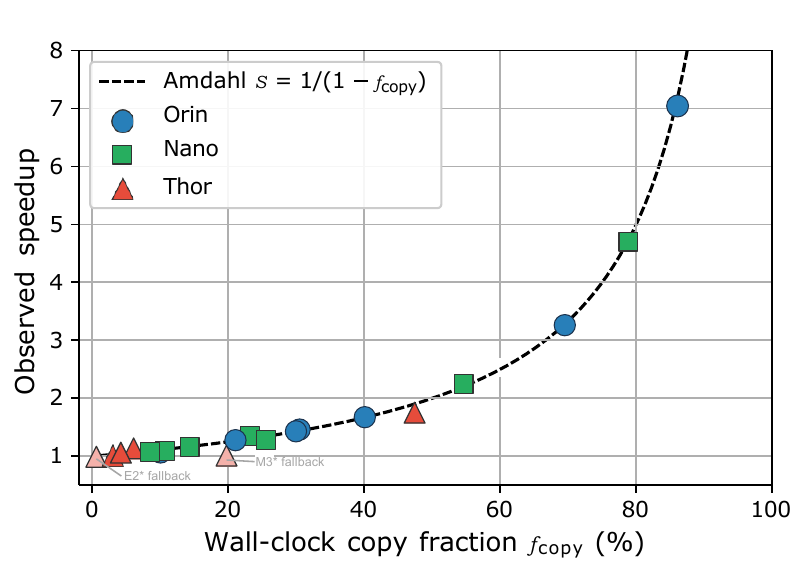}
\vspace{-0.2cm}

\caption{Observed \tr{} speedup vs.\ copy fraction ($\fcopy$) across 21 data points on three platforms. The dashed curve is the Amdahl approximation $S = 1/(1 - \fcopy)$.}
\label{fig:amdahl}
\end{figure}

\subsection{Throughput and Memory Savings}
\label{sec:eval:bandwidth}

% \YHadd{Although the CPU and GPU share DRAM, each explicit \texttt{cudaMemcpy} still moves data between separate buffers. On Orin, runtime counters show that \tr{} skips 12.25--51.43\,MiB of copy payload per complete iteration across the seven workloads, including 32\,MiB for \bench{M2}, 45\,MiB for \bench{C1}, and 51.43\,MiB for \bench{E1}. \bench{E1} processes four images per iteration, corresponding to 12.86\,MiB of avoided copy payload per image. For example, a 10-frame/s camera feed processed in full four-image batches with the same engine and tensor shapes would avoid approximately 129\,MiB/s of copy payload. These counts quantify the workload's avoided copy traffic. Thor's \bench{M3} and \bench{E2} retain their baseline copy traffic because \tr{} falls back.}

Beyond primary-region latency, \care{} also improves the complete application loop. Table~\ref{tab:resource_orin} reports throughput ($Q$), measured as completed work units per second (e.g., object-detection images in \bench{E1} and segmented images in \bench{C2}) over the complete timed loop. \tr{} raises throughput on every Orin workload. For example, \bench{M2}'s FFT batches increase by 2.64$\times$ and \bench{C1}'s images by 1.47$\times$. For the application workloads, YOLOv5 rises from 194.47 to 272.38 images/s, while DeepSORT feature extraction rises from 94.85 to 106.81 frame jobs/s.

\care{} also saves memory, both in buffer footprint and in copy traffic. To quantify these savings, we record two quantities. The first is the selected buffer capacity ($M$), which is the total size of the live host and device allocations that \tr{} unifies. The second is the copy payload that \tr{} skips in each complete iteration, which we obtain from runtime counters. As Table~\ref{tab:resource_orin} shows, $M$ is halved in every row, reflecting the removal of duplicate host/device storage. Across the seven workloads, the baseline copies 12.25--51.43\,MiB per complete iteration between the unified buffers, including 32\,MiB for \bench{M2}, 45\,MiB for \bench{C1}, and 51.43\,MiB for \bench{E1}, and the runtime counters confirm that \tr{} skips all of these copies. \bench{E1} processes four images per iteration, corresponding to 12.86\,MiB of avoided copy payload per image. For example, a 20 FPS (frames per second) camera feed with the same engine and tensor shapes would avoid approximately 257\,MiB/s of copy payload.

% Q recomputed from five raw result.json timed loops per method; M from the validated resource accounting.
\begin{table*}[t]
\centering
\caption{Complete-loop throughput and selected buffer capacity on AGX Orin. $Q$ is mean $\pm$ sample SD across five independent runs; $M$ counts the live allocations selected for unification. Subscripts B and D denote Baseline and \tr{}.}
\label{tab:resource_orin}
\footnotesize
\begin{tabular}{@{}lrrrrr@{}}
\toprule
\textbf{Benchmark} & \textbf{$Q_B$ (work units/s)} & \textbf{$Q_D$ (work units/s)} & \textbf{$Q_D/Q_B$} & \textbf{$M_B$ (MiB)} & \textbf{$M_D$ (MiB)} \\
\midrule
\bench{M1} & 109.21 $\pm$ 0.15 & 184.65 $\pm$ 0.12 & 1.691$\times$ & 64.000 & 32.000 \\
\bench{M2} & 164.53 $\pm$ 1.04 & 434.24 $\pm$ 0.29 & 2.639$\times$ & 64.000 & 32.000 \\
\bench{M3} & 199.26 $\pm$ 1.03 & 664.97 $\pm$ 2.44 & 3.337$\times$ & 58.188 & 29.094 \\
\bench{C1} & 57.68 $\pm$ 0.03 & 84.63 $\pm$ 0.08 & 1.467$\times$ & 90.000 & 45.000 \\
\bench{C2} & 39.86 $\pm$ 0.02 & 51.13 $\pm$ 0.02 & 1.283$\times$ & 90.000 & 45.000 \\
\bench{E1} & 194.47 $\pm$ 0.63 & 272.38 $\pm$ 0.30 & 1.401$\times$ & 102.869 & 51.434 \\
\bench{E2} & 94.85 $\pm$ 0.37 & 106.81 $\pm$ 0.31 & 1.126$\times$ & 24.500 & 12.250 \\
\bottomrule
\end{tabular}
\par\smallskip
\begin{minipage}{\textwidth}
% \footnotesize
Note: Work units are SGEMM jobs for \bench{M1}, forward-and-inverse FFT batches for \bench{M2}, convolution batches for \bench{M3}, images for \bench{C1}, \bench{C2}, and \bench{E1}, and feature-extraction frame jobs for \bench{E2}. \bench{E1} processes four images per iteration.
% The \bench{E2} engine supports up to 128 feature crops per frame.
\end{minipage}
\end{table*}

% \subsection{Overhead and Practical Considerations}
% \subsection{Practical Considerations}
% \label{sec:eval:overhead}
\subsection{Practical Costs and Energy Consumption}
\label{sec:eval:practical}

\tr{}'s one-time cost is the profile run used to build the optimization plan.
Profiling adds less than 1\% per-iteration overhead, so the dominant cost is executing 50--100 baseline iterations before optimized runs.
For \bench{C1} at a 30\,FPS operating point, profiling 50 iterations takes about 1.67\,s; the 5.52\,ms per-iteration latency saving recovers that cost within 156 optimized iterations, or about 5.2\,s of operation.

\begin{figure}[t]
\centering
\includegraphics[width=0.8\columnwidth]{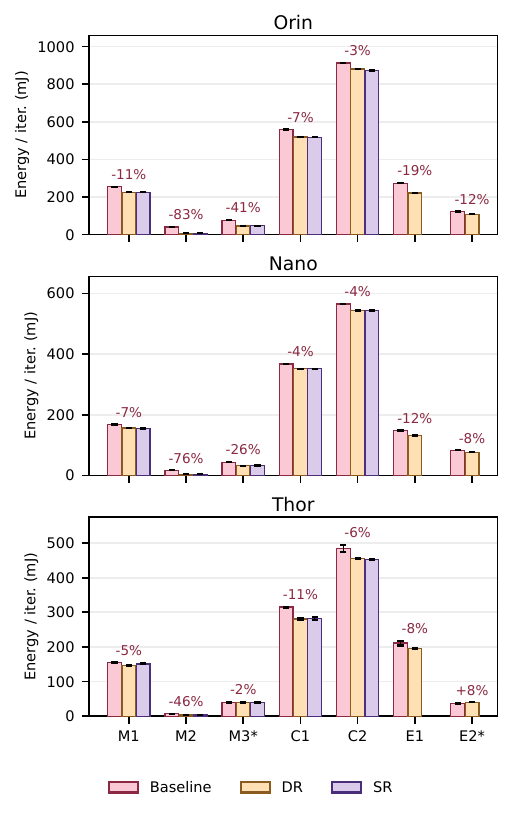}
\caption{Estimated primary-region energy (mJ). Signed percentages show $(E_{\mathrm{DR}}-E_{\mathrm{Baseline}})/E_{\mathrm{Baseline}}$; negative values indicate lower energy. Each panel has its own y-axis scale. $\ast$ mark \tr{}'s baseline fallback on Thor \bench{M3}/\bench{E2}, where differences reflect variation between baseline runs. \bench{E1}--\bench{E2} evaluate the binary deployment path and show Baseline and \tr{}. }
\label{fig:energy}
\end{figure}

% \begin{table*}[t]
% \centering
% \caption{Per-iteration energy on Orin and Nano. Energy = measured GPU power $\times$ iteration time. Savings scale with copy fraction. Despite higher instantaneous power (6--28\%), total energy decreases because iterations complete faster.}
% \label{tab:energy}
% \begin{tabular}{@{}lrrrrrr@{}}
% \toprule
% ID & Orin BL (mJ) & Orin \tr{} (mJ) & Saved & Nano BL (mJ) & Nano \tr{} (mJ) & Saved \\
% \midrule
% \bench{M1} & 160.4 & 112.6 & \textbf{29.8\%} & 146.4 & 125.3 & \textbf{14.4\%} \\
% \bench{M2} & 30.4 & 4.1 & \textbf{86.4\%} & 18.6 & 3.9 & \textbf{78.9\%} \\
% \bench{M3} & 47.6 & 14.9 & \textbf{68.7\%} & 36.9 & 18.6 & \textbf{49.5\%} \\
% \bench{C1} & 318.5 & 239.2 & \textbf{24.9\%} & 252.5 & 217.4 & \textbf{13.9\%} \\
% \bench{C2} & 587.5 & 497.4 & \textbf{15.3\%} & 417.4 & 381.9 & \textbf{8.5\%} \\
% \bench{E1} & 173.3 & 156.8 & \textbf{9.5\%} & 130.4 & 121.2 & \textbf{7.1\%} \\
% \bench{E2} & 61.5 & 54.4 & \textbf{11.5\%} & 55.8 & 51.9 & \textbf{7.1\%} \\
% \bottomrule
% \end{tabular}
% \end{table*}
% Replaced by \ref{fig:energy}; tab:power_nano in the appendix retains the per-platform mW supporting data.

% For edge deployments on battery or solar power, per-iteration energy is often more important than latency alone.
Figure~\ref{fig:energy} shows how \care{} affects energy consumption by comparing the primary-region energy of \tr{} with that of the baseline. These memory savings are secondary to latency in our evaluation, but they matter on memory-constrained edge devices.
For edge deployments running on battery or solar power, energy consumption is as important as latency, if not more so.
Hence, we estimate the energy of each run by multiplying the mean rail power over the timed loop by that run's primary-region latency, i.e., $E\,\mathrm{(mJ)}=P_\text{mean}\,\mathrm{(mW)}\times t\,\mathrm{(ms)}/1000$. The figure reports the mean and sample standard deviation across five runs.
On Orin, estimated energy consumption falls by 3.5--83.2\% across the seven benchmarks. On most benchmarks, we found that instantaneous GPU power increases under \care{} because the GPU no longer idles during DMA copies, but total energy per iteration still decreases because iterations complete faster. The largest reduction is \bench{M2}. Its mean rail power rises by 18.8\%, but latency falls from 4.27 to 0.61\,ms, yielding an 83.2\% lower energy estimate. Nano saves 3.9--75.8\% (See Appendix Table~\ref{tab:power_nano}). Thor \bench{M3} and \bench{E2} use baseline fallback, so their plotted differences reflect variation between baseline runs.

\section{Limitations and Future Work}\label{sec:limitations}
% \todoMK{Place it before the Related Work section}

% \todoMKX{We already have these in 'Overhead and Practical Considerations'. Maybe you forgot to delete?}\todoYHZ{Deleted. Retained only Limitations and Future Work.}

% \subsection{}\label{sec:limitations}

\care{}'s two optimization paths have different envelopes: \pr{} proves source-available transformations, while \tr{} provides a guarded contract for stable binary deployments.
We summarize the remaining boundaries here.

\paragraph{Binary deployment contract.}
\tr{} provides profile-conditional correctness, not \pr{}’s input-universal proof. 
% \todoMKX{Access-checked profiling strengthens validation of the profiled execution by rejecting pairs whose protected copy windows contain ordinary host accesses. }
\tr{} protects the pages of candidate host buffers during host-access windows in which a CPU read or write would violate the allocation unification semantics. If the profiled execution touches those pages, the corresponding pair is rejected from the optimization plan. This strengthens validation beyond API-visible traces, but
% The remaining limitation is profile coverage: future inputs may exercise unobserved host accesses, allocation pairings, copy orderings, or stream behavior. Unmatched executions pass through structurally, and unprofitable plans are disabled after calibration.
the remaining limitation is profile coverage. An input encountered later in deployment could trigger unobserved host accesses, allocation pairings, copy orderings, or stream pattern that profiling never witnessed. Page protection checks candidate buffers for unsafe CPU accesses during validation, but is not active in later optimized runs. A new, unseen input could therefore cause a CPU read or write to a unified buffer that \tr{} does not notice, even if its CUDA calls still match the plan. Startup checks and API matching do not detect this kind of change.
% \yoon{I couldn't understand what you added here. Could you rephrase it?} \YHcomment{rephrased}

\paragraph{API coverage.}
\care{}-DR intercepts the CUDA \emph{runtime} API (e.g., \texttt{cudaMalloc}, \texttt{cudaMemcpy}, \texttt{cudaMemcpyAsync}). Applications relying on the CUDA \emph{driver} API (e.g., \texttt{cuMemAlloc}, \texttt{cuMemcpyDtoH}), such as DeepStream~\cite{nvidia_deepstream_sdk} and \texttt{trtexec}~\cite{tensorrt}, are not covered. Similarly, two-dimensional and pitched copies (\texttt{cudaMemcpy2D}) are currently unsupported.
Supporting these would require engineering new interception hooks, but no changes to the core analysis.

\paragraph{Compile-time visibility.}
As discussed in \S\ref{sec:pr}, \pr{} conservatively declines \texttt{(h,d)} pairs with dynamic sizes, cross-compilation-unit pointer flow, or complex control flow.
Whole-program analysis via link-time optimization (LTO) could widen its scope, but the dominant barrier is dynamic-library boundaries: copies inside separately compiled shared libraries (e.g., TensorRT, cuDNN) are invisible at compile time regardless of analysis granularity.
% \tr{} complements \pr{} by observing runtime CUDA API calls across those library boundaries, but it recovers deployability rather than the same static proof strength.
\tr{} complements \pr{} by observing runtime CUDA API calls across those library boundaries. This lets it optimize cases \pr{} cannot reach, but the guarantee it provides is profile-based rather than a compile-time proof.

\paragraph{Multi-stream workloads.}
Our current evaluation covers default-stream or explicitly synchronized copy/kernel regions.
Pairs involving multiple streams, events, or overlapped copy/compute are treated as ineligible and execute through the original CUDA runtime path. Optimizing them would require stream-aware dependency analysis and additional synchronization repair.

\paragraph{Platform scope.}
\care{}'s transformation principle applies to any UMA platform where software retains a duplicated host/device object model, but our implementation is specific to CUDA.
High-end data center architectures such as the Grace Hopper Superchip~\cite{ravi2024gracehopper} provide hardware cache-coherency via NVLink-C2C, but this does not itself remove explicit copies between separately allocated buffers.
On edge SoCs lacking such interconnects, however, the legacy dGPU-style code pattern remains pervasive.
Adapting \care{} to other UMA ecosystems (e.g., Apple Metal, Vulkan on integrated GPUs, AMD APU OpenCL) is a natural direction for future work, provided the implementation preserves the same 
% semantic and operational guardrails.
correctness and safety guarantees.

% \yanbocomment{Verify \texttt{concurrentManagedAccess} on Orin Nano, AGX Orin, and Thor; if all three report 1, state that as fact in the sentence above.}

% \YHcomment{The recorded \texttt{concurrentManagedAccess} values are 0/0/1 for Orin Nano/Orin/Thor. The SR and DR rules cover all three. The original program must provide GPU-quiescent host-access phases on the two CMA=0 platforms, including work unrelated to the pair. Thor permits concurrent managed access, while the value-preservation and ordering conditions still apply.}

%% §6 Related Work
\section{Related Work}\label{sec:related-work}

\paragraph{Redundant Copies on Other UMA Platforms}

The redundant copy problem is not specific to CUDA or Jetson. It arises on any UMA platform where programming abstractions assume separate CPU and GPU memory spaces.
Dashti and Fedorova~\cite{dashti2017ismm} provide a foundational study of memory management methods on integrated CPU--GPU systems, showing that explicit-copy code persists even when zero-copy and unified alternatives are available.
On Qualcomm Adreno, Wang et al.~\cite{wang2018adreno} identify memory copies between Java layer and native layer as a common OpenCL issue on Snapdragon SoCs and recommend the platform-specific zero-copy extension \texttt{cl\_qcom\_ion\_host\_ptr}.
Grasso et al.~\cite{grasso2014mali} report the same pattern on ARM Mali-T604, where map/unmap operations must be manually substituted to avoid unnecessary transfers.

Newer hardware and frameworks have begun to address this problem at the platform level.
On AMD MI300A APUs, Wahlgren et al.~\cite{wahlgren2025mi300a} show that applications using the unified physical memory model match or outperform explicit management while reducing memory costs by up to 44\%.
On Apple Silicon, MLX-based inference~\cite{barrios2026vllmmlx} achieves 21--87\% higher throughput than llama.cpp by exploiting zero-copy tensor operations, and Hubner et al.~\cite{hubner2025apple} confirm that \texttt{MTLResourceStorageModeShared} buffers eliminate manual transfers entirely.
However, these benefits accrue only to applications written for (or ported to) the new APIs.
On Jetson, and more broadly across large existing CUDA codebases that follow the dGPU-style pattern, the legacy pattern persists, and rewriting is impractical for closed-source and binary-only deployments.
\care{}-\tr{} addresses this gap by optimizing existing applications without requiring source modification or framework migration.

\paragraph{Compiler and Runtime Copy Optimization}

Standard compiler optimizations (mem2reg, GVN, dead-store elimination) remove redundant memory operations within a single address space~\cite{llvm}, but they optimize individual load/store instructions and are unaware of the cross-address-space redundancy introduced by CUDA's allocation and copy API conventions.
Similarly, compiler \texttt{memcpy} optimizations reason about byte copies between pointers in one memory model; they do not decide that a host allocation and a device allocation should become one object, nor do they repair the ordering effect of a removed D2H copy.
Li et al.~\cite{li2018openmpum} apply LLVM-based data-reuse analysis to OpenMP GPU offloading under unified memory, reducing implicit data transfers through compiler--runtime collaboration.
However, their work targets OpenMP's pragma-based implicit data management, not the explicit \texttt{malloc}--\texttt{cudaMalloc}--\texttt{cudaMemcpy} pattern in native CUDA code. The Grace Hopper Superchip~\cite{ravi2024gracehopper} supports coherent access to shared system memory via NVLink-C2C, but explicit copies between separate allocations remain a software choice. This hardware feature is unavailable on edge SoCs like Jetson.

\paragraph{Source-Level Refactoring}

Nejadfard and Sang~\cite{nejadfard2024libtooling} present a Clang LibTooling-based refactoring tool that automatically replaces \texttt{malloc}/\allowbreak\texttt{cudaMalloc} with \texttt{cudaMallocManaged} in CUDA source code.
Their tool operates at the AST level and performs syntactic pattern matching without analyzing control flow, data dependencies, or pointer aliasing, and therefore cannot verify whether unifying a specific pair is semantically safe, as \care{}-\pr{} does.

% \care{}-\pr{} operates on optimized LLVM IR and checks alias relationships, control-flow dominance, and pointer-escape properties before transforming each pair.
\care{}-\pr{} formulates the rewrite on LLVM IR, within the full-buffer H2D scope, with sufficient conditions for preserving live values, access order, pointer uses, and allocation lifetimes.
Source refactoring also cannot reach copies hidden in closed-source runtimes such as TensorRT, whereas \care{}-\tr{} can apply a guarded API-level optimization to stable binary-only deployments.
We do not include the LibTooling tool as an experimental baseline because it performs no safety analysis. Applying it to our benchmarks might produce similar allocation replacement but without checking the safety conditions, making the comparison about analysis capability rather than runtime performance.

% To our knowledge, no prior system checks semantic safety conditions (alias analysis, escape detection, ordering verification) before performing the \texttt{malloc}--\texttt{cudaMalloc}--\texttt{cudaMemcpy} elimination at either the binary or IR level.
% \care{} is the first to combine safety-verified compile-time transformation (\pr{}) with profile-guided binary-level optimization (\tr{}) for this pattern. 

To the best of our knowledge, \care{} is the first system to combine a source-available path that proves semantic safety for this allocation unification transformation with a guarded binary path for stable, API-visible legacy CUDA patterns.

% \input{sections/7b-blackwell}
%% §9 Conclusion
\section{Conclusion}\label{sec:conclusion}

\care{} is a system that automatically detects and eliminates redundant memory copies on unified-memory architectures, where dGPU-style code still persists and wastes bandwidth by copying data between buffers that already share the same physical DRAM. \care{}'s static and dynamic approaches verify safety conditions to ensure that unifying separate host and device allocations into a single shared allocation does not change program semantics, and then eliminate the associated copies. This principle applies broadly to any UMA platform where programming models assume separate CPU and GPU memory spaces, and \care{}'s implementation for CUDA on Jetson serves as a case study demonstrating the potential for significant performance gains with minimal engineering effort.

% \begin{acks}
% % TODO: Add acknowledgments after de-anonymization
% \end{acks}

\bibliographystyle{ACM-Reference-Format}
\bibliography{references}
\newpage
\appendix

\section{Proof of \pr{} Soundness}\label{sec:appendix-sr-proof}

% This section proves Theorem~\ref{thm:sr-formal} (\S\ref{sec:pr-correctness}) using the notation of Table~\ref{tab:sr-notation}.

\begin{proof}[Proof of Theorem~\ref{thm:sr-formal}]
Fix an execution of $P'$ with events $E'$. We build the matching execution $E$ of $P$ alongside it and show that every retained read returns the same value in both. Branches and external calls depend only on read values, so the two executions then take the same branches and make the same external calls, which gives $\sem{P'} \subseteq \sem{P}$.

\vspace{0.25\baselineskip}
\emph{Step 1: $P'$ is race-free.} Take $a, b \in E'$ that access the same byte, one of them a write. Events of $\mathrm{sync}(c)$ access no memory, and S4 orders the allocation and free of $p$ against every access, so we may assume both are retained. If they access a byte outside $p$, or $h[i]$ and $h[i]$, or $d[i]$ and $d[i]$ in $P$, then $P$ being race-free orders them by $\hb$. If they access $h[i]$ and $d[i]$, S2 orders them by $\hb$. In both cases (OP) orders them by $\hbp$. The values that $P'$ reads therefore do not depend on how its unordered events interleave, and we may follow any total order $\tau'$ of $E'$ consistent with $\hbp$.

\vspace{0.25\baselineskip}
\emph{Step 2: a common order.} Drop the $\mathrm{sync}(c)$ events and the allocation and free of $p$ from $\tau'$, and insert each event of $E \setminus \hat{E}$ (linking copies and the allocations and frees of $h$ and $d$) right after its last $\hb$-predecessor. Call the result $\tau$. By (OP), $\tau$ is consistent with $\hb$, and since $P$ is race-free, reading along $\tau$ gives the values of $E$. Moreover, if a read $r$ of $x$ follows $e$ in $\tau$ with no write to $x$ between them, then $r \not\hb e$ and no $w$ satisfies $e \hb w \hb r$, so $\live(e, x)$ holds. Contrapositively, S1 guarantees that $\bar{x}$ is never read along $\tau$ after a retained write to $x$ unless it is written again first.

\vspace{0.25\baselineskip}
\emph{Step 3: invariant.} We walk along $\tau$ and maintain the following. For every $i$ and every initialized $x \in \{h[i], d[i]\}$ that will be read along $\tau$ before being written again, $P$ holds in $x$ the value that $P'$ holds in $p[i]$. All bytes outside $h$, $d$, and $p$ hold the same values in both programs. The invariant holds initially. For the next event $e$ of $\tau$:
\begin{itemize}[leftmargin=1.2em, itemsep=0pt]
\item \emph{Read of $x \in h \cup d$.} $x$ is initialized, by the premise on uninitialized bytes, and $e$ reads it next, so the invariant gives the same value in $P'$. Reads of other bytes agree directly.
\item \emph{Retained write to $x \in h \cup d$.} Its operands agree, so both programs store the same value, $P$ into $x$ and $P'$ into $p[i]$, and the invariant holds for $x$. $P'$ also overwrites the old value of $\bar{x}$, which by Step~2 and S1 is not read again before being rewritten, and $e$ itself does not read $\bar{x}$. No other byte changes.
\item \emph{Linking copy} (only in $\tau$). For each copied $i$, $P$ gives the destination $y$ the value of $\bar{y}$. If $\bar{y}$ is initialized, the copy reads it, so the invariant held for $\bar{y}$ and $p[i]$ already holds that value. The invariant then holds for $y$ although $P'$ does nothing. If $\bar{y}$ is uninitialized, so is $y$, and nothing is required.
\item \emph{Allocation or free of $h$ or $d$} (only in $\tau$). A new byte is uninitialized, so nothing is required. By S4, $p$ is allocated before and freed after every retained access to $h$ or $d$.
\item \emph{$\mathrm{sync}(c)$} (only in $\tau'$). It accesses no memory.
\item \emph{Any other event.} It is the same operation in both programs. By S3, $h$ and $d$ appear in it only as addresses, so replacing them with $p$ changes neither the values it computes nor the bytes it touches.
\end{itemize}
Hence every retained read returns the same value in $P$ and $P'$. For several disjoint pairs, the invariant and S1--S4 apply to each pair separately, so the argument covers them together.
\end{proof}

\section{Cross-Platform Results}\label{sec:appendix-cross-platform}

Tables~\ref{tab:tr_orin}--\ref{tab:fcopy_xplat} and Figures~\ref{fig:tr_candle_nano}--\ref{fig:tr_candle_thor} report the complete \tr{} and \pr{} results for all three platforms.
The main text in Section~\ref{sec:evaluation} centers on Orin via Figure~\ref{fig:results_orin}. The numerical Orin table and the Nano/Thor results appear here.

\paragraph{Per-Platform Results (Orin, Nano, and Thor)}
\label{sec:appendix-tr}

\begin{table*}[t]
% \caption{\tr{} numerical results on Orin (5-run), complementing the \tr{} candles in Figure~\ref{fig:results_orin}. p99 columns quantify tail latency improvement. \pr{} runtime numbers are reported separately in~\S\ref{sec:eval:pr}.}
\caption{Primary-region \tr{} results on AGX Orin over five independent runs, complementing Figure~\ref{fig:results_orin}. Each p99 entry is the mean of the five per-process p99 values; $\Delta$p99 is \tr{} minus Baseline. All latency and p99 values are in milliseconds. \pr{} results appear in~\S\ref{sec:eval:pr}.}
\label{tab:tr_orin}
\centering
\footnotesize
\begin{tabular}{@{}lrrrrrrr@{}}
\toprule
% ID & Baseline (ms) & \tr{} (ms) & Speedup & Recovered & BL p99 & \tr{} p99 & $\Delta$p99 \\
ID & Baseline (ms) & \tr{} (ms) & Speedup & Recovered & BL p99 (ms) & \tr{} p99 (ms) & $\Delta$p99 (ms) \\
\midrule
\bench{M1} & 9.115 $\pm$ 0.011 & 5.389 $\pm$ 0.003 & \textbf{1.69$\times$} & 103.3\% & 9.17 & 5.41 & $-3.76$ \\
\bench{M2} & 4.271 $\pm$ 0.024 & 0.606 $\pm$ 0.004 & \textbf{7.05$\times$} & 99.7\% & 4.41 & 0.63 & $-3.78$ \\
\bench{M3} & 4.976 $\pm$ 0.026 & 1.479 $\pm$ 0.005 & \textbf{3.37$\times$} & 99.5\% & 5.12 & 1.52 & $-3.59$ \\
\bench{C1} & 17.336 $\pm$ 0.010 & 11.815 $\pm$ 0.011 & \textbf{1.47$\times$} & 99.5\% & 17.49 & 11.95 & $-5.54$ \\
\bench{C2} & 25.090 $\pm$ 0.013 & 19.557 $\pm$ 0.008 & \textbf{1.28$\times$} & 99.6\% & 25.24 & 19.63 & $-5.60$ \\
\bench{E1} & 20.569 $\pm$ 0.067 & 14.685 $\pm$ 0.016 & \textbf{1.40$\times$} & 98.4\% & 20.89 & 14.80 & $-6.09$ \\
\bench{E2} & 7.979 $\pm$ 0.054 & 6.834 $\pm$ 0.023 & \textbf{1.17$\times$} & n/a & 8.59 & 7.45 & $-1.14$ \\
\bottomrule
\end{tabular}
\end{table*}

\begin{figure}[t]
\centering
\includegraphics[width=\columnwidth]{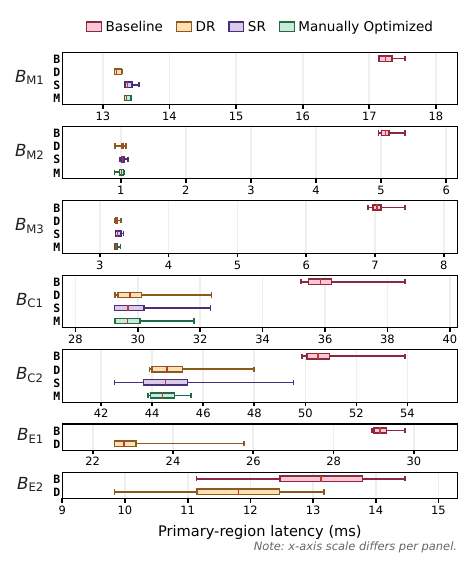}
% \caption{Per-benchmark results on Jetson Orin Nano. Same conventions as Figure~\ref{fig:results_orin}. Nano shares Orin's Ampere SM~8.7 GPU but has fewer CUDA cores, so compute takes a larger share of iteration time and $\fcopy$ is lower (Table~\ref{tab:fcopy_xplat}). Speedups are smaller than on Orin.}
\caption{Primary-region latency on Orin Nano, using the five-run measurement and plotting conventions of Figure~\ref{fig:results_orin}. Each benchmark panel has its own x-axis scale.}
\label{fig:tr_candle_nano}
\end{figure}

\begin{figure}[t]
\centering
\includegraphics[width=\columnwidth]{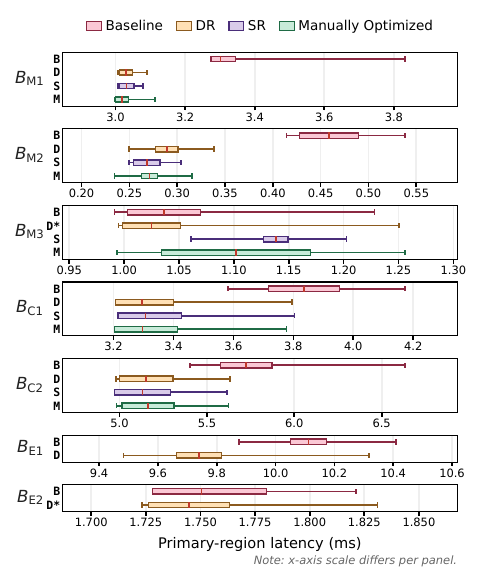}
% \caption{Per-benchmark guarded-deployment results on Jetson AGX Thor (Blackwell SM~11.0). Same conventions as Figure~\ref{fig:results_orin}. Asterisks denote workloads for which the profitability guard selected the baseline fallback. \bench{M2} remains profitable despite a managed-memory kernel slowdown, whereas \bench{M3} falls back because that slowdown exceeds the eliminated-copy benefit (\S\ref{sec:blackwell-l2}). \bench{E1} uses a Thor-native batch-4 FP16 engine built from the unchanged YOLOv5n ONNX model.}
\caption{Primary-region latency on AGX Thor, using the conventions of Figure~\ref{fig:results_orin}. D* marks \tr{}'s baseline fallback for \bench{M3} and \bench{E2}. \pr{} and Manual show transformed executions.}
\label{fig:tr_candle_thor}
\end{figure}

\begin{table*}[t]
\caption{\tr{} results on Nano (5-run). Recovered fraction ranges from 97.1\% to 103.9\% on benchmarks with a manual reference; \bench{E2} has no independent manual reference.}
\label{tab:tr_nano}
\centering
\footnotesize
\begin{tabular}{@{}lrrrrr@{}}
\toprule
ID & Baseline (ms) & \tr{} (ms) & Speedup & Managed (ms) & Recovered \\
\midrule
\bench{M1} & 17.240 $\pm$ 0.011 & 13.204 $\pm$ 0.004 & \textbf{1.31$\times$} & 13.357 $\pm$ 0.006 & 103.9\% \\
\bench{M2} & 5.063 $\pm$ 0.032 & 1.029 $\pm$ 0.010 & \textbf{4.92$\times$} & 1.011 $\pm$ 0.006 & 99.6\% \\
\bench{M3} & 7.029 $\pm$ 0.009 & 3.244 $\pm$ 0.003 & \textbf{2.17$\times$} & 3.234 $\pm$ 0.003 & 99.7\% \\
\bench{C1} & 35.846 $\pm$ 0.031 & 29.745 $\pm$ 0.092 & \textbf{1.21$\times$} & 29.664 $\pm$ 0.044 & 98.7\% \\
\bench{C2} & 50.502 $\pm$ 0.058 & 44.584 $\pm$ 0.138 & \textbf{1.13$\times$} & 44.408 $\pm$ 0.087 & 97.1\% \\
\bench{E1} & 29.161 $\pm$ 0.047 & 22.779 $\pm$ 0.126 & \textbf{1.28$\times$} & 22.616 $\pm$ 0.042 & 97.5\% \\
\bench{E2} & 13.130 $\pm$ 0.040 & 11.808 $\pm$ 0.016 & \textbf{1.11$\times$} & n/a & n/a \\
\bottomrule
\end{tabular}
\end{table*}

Nano follows the same qualitative trend as Orin, but with smaller gains throughout.
Both platforms use Ampere SM~8.7, but Nano has fewer CUDA cores, so computation occupies a larger fraction of each iteration and $\fcopy$ is correspondingly lower.

\begin{table*}[t]
\caption{\tr{} guarded-deployment results on Thor (5-run). Asterisks denote measured baseline fallback after a non-positive profitability calibration; these rows are not raw transformed performance. }
\label{tab:tr_thor}
\centering
\footnotesize
\begin{tabular}{@{}lrrrrr@{}}
\toprule
ID & Baseline (ms) & \tr{} (ms) & Speedup & Managed (ms) & Recovered \\
\midrule
\bench{M1} & 3.301 $\pm$ 0.007 & 3.030 $\pm$ 0.004 & \textbf{1.09$\times$} & 3.019 $\pm$ 0.007 & 96.1\% \\
\bench{M2} & 0.459 $\pm$ 0.004 & 0.289 $\pm$ 0.015 & \textbf{1.59$\times$} & 0.271 $\pm$ 0.012 & 90.3\% \\
\bench{M3}$^*$ & 1.036 $\pm$ 0.021 & 1.025 $\pm$ 0.014 & \textbf{1.01$\times$} & 1.102 $\pm$ 0.049 & n/a \\
\bench{C1} & 3.836 $\pm$ 0.015 & 3.295 $\pm$ 0.008 & \textbf{1.16$\times$} & 3.297 $\pm$ 0.015 & 100.4\% \\
\bench{C2} & 5.724 $\pm$ 0.031 & 5.152 $\pm$ 0.014 & \textbf{1.11$\times$} & 5.162 $\pm$ 0.023 & 101.8\% \\
\bench{E1} & 10.112 $\pm$ 0.077 & 9.740 $\pm$ 0.111 & \textbf{1.04$\times$} & 9.689 $\pm$ 0.094 & 87.9\% \\
\bench{E2}$^*$ & 1.750 $\pm$ 0.006 & 1.745 $\pm$ 0.003 & \textbf{1.00$\times$} & n/a & n/a \\
\bottomrule
\end{tabular}
\end{table*}

Thor's guard enables five net-positive workloads: \bench{M1}, \bench{M2}, \bench{C1}, \bench{C2}, and \bench{E1} achieve primary-region speedups of $1.09\times$, $1.59\times$, $1.16\times$, $1.11\times$, and $1.04\times$, respectively.
The profitability decision uses three paired complete-loop trials and enables a plan when their median relative gain is strictly positive. 
% semantic page validation is a separate prerequisite.
% For \bench{M3}, the three paired calibration trials change full-loop latency by $-3.7\%$, $-16.7\%$, and $-17.2\%$; the guard therefore selects baseline fallback.
For \bench{M3}, the candidate plan increases full-loop latency by 3.7\%, 16.7\%, and 17.2\% in the three paired calibration trials. The guard therefore selects baseline fallback.
\bench{E2} also falls back after a small negative median full-loop gain ($-0.65\%$).
The near-$1\times$ values on these two rows are independent baseline-process variation, not transformed execution.
For the enabled canonical workloads, \tr{} and the manual reference agree closely: \bench{C1} recovers $100.4\%$ and \bench{C2} $101.8\%$ of the measured manual-reference gain.
Thor's native \bench{E2} TensorRT engine supports at most 16 feature crops per frame, whereas the measured frames contain 23--32 detections and the Orin/Nano engine supports 128.
We therefore label \bench{E2} as a capped feature-extraction job and do not compare its absolute Thor throughput with the uncapped Orin/Nano workload.

\paragraph{\pr{} Results on Nano and Thor}

\pr{} produces the same analysis decisions on all platforms because it operates on platform-independent LLVM IR.

\begin{table}[t]
\caption{\pr{} runtime on Nano and Thor (5-run). On Nano, \pr{} and \tr{} remain closely matched. On Thor, \bench{M2} reaches $1.71\times$ under \pr{} versus $1.59\times$ under enabled \tr{}. \bench{M3} regresses to $0.91\times$ because \pr{} has no deployment-time profitability fallback. }
\label{tab:pr_xplat}
\centering
\footnotesize
\begin{tabular}{@{}lrr@{\hspace{2em}}lrr@{}}
\toprule
\multicolumn{3}{c}{Nano} & \multicolumn{3}{c}{Thor} \\
\cmidrule(r){1-3} \cmidrule(l){4-6}
ID & \pr{} (ms) & Speedup & ID & \pr{} (ms) & Speedup \\
\midrule
\bench{M1} & 13.369 $\pm$ 0.007 & \textbf{1.29$\times$} & \bench{M1} & 3.032 $\pm$ 0.006 & \textbf{1.09$\times$} \\
\bench{M2} & 1.027 $\pm$ 0.010 & \textbf{4.93$\times$} & \bench{M2} & 0.268 $\pm$ 0.001 &  \textbf{1.71$\times$} \\
\bench{M3} & 3.254 $\pm$ 0.004 & \textbf{2.16$\times$} & \bench{M3} & 1.138 $\pm$ 0.033 &  \textbf{0.91$\times$} \\
\bench{C1} & 29.683 $\pm$ 0.042 & \textbf{1.21$\times$} & \bench{C1} & 3.307 $\pm$ 0.025 & \textbf{1.16$\times$} \\
\bench{C2} & 44.518 $\pm$ 0.207 & \textbf{1.13$\times$} & \bench{C2} & 5.132 $\pm$ 0.010 & \textbf{1.12$\times$} \\

\bottomrule
\end{tabular}
\end{table}

Table~\ref{tab:pr_xplat} shows the primary-region latency of \pr{} on Nano and Thor for the five source-available benchmarks. On Nano, \pr{} closely matches \tr{} (Table~\ref{tab:tr_nano}), and their speedups differ by at most $0.02\times$. On Thor, the two approaches also agree on \bench{M1}, \bench{C1}, and \bench{C2}. \bench{M2} runs faster under \pr{} ($1.71\times$ versus $1.59\times$), and its latency (0.268~ms) is within noise of the manual reference (0.271~ms in Table~\ref{tab:tr_thor}). \bench{M3} is the only regression ($0.91\times$), whereas \tr{} falls back to the baseline for this workload because, unlike guarded \tr{}, \pr{} applies its accepted transformation without a runtime profitability calibration.

% \pr{} on Thor was compiled with LLVM 18 (Ubuntu 24.04); the New Pass Manager API is compatible across LLVM 15--18 without source changes.
% Unlike guarded \tr{}, \pr{} applies its accepted transformation without a runtime profitability calibration; its \bench{M3} result therefore independently exposes the underlying Thor slowdown rather than falling back.

\paragraph{Power and Energy on Nano}

\begin{table}[t]
\caption{Primary-region energy on Nano. For each run, energy is the mean rail power in its measurement window multiplied by that run's primary-region latency. Savings range from 4--76\%.}
\label{tab:power_nano}
\centering
\footnotesize
\begin{tabular}{@{}lrrrrr@{}}
\toprule
ID & BL (mW) & \tr{} (mW) & BL (mJ) & \tr{} (mJ) & Saved \\
\midrule
\bench{M1} & 9741 & 11840 & 167.9 & 156.3 & \textbf{6.9\%} \\
\bench{M2} & 3268 & 3894 & 16.5 & 4.0 & \textbf{75.8\%} \\
\bench{M3} & 6213 & 9935 & 43.7 & 32.2 & \textbf{26.2\%} \\
\bench{C1} & 10242 & 11820 & 367.1 & 351.6 & \textbf{4.2\%} \\
\bench{C2} & 11188 & 12185 & 565.0 & 543.2 & \textbf{3.9\%} \\
\bench{E1} & 5081 & 5744 & 148.2 & 130.9 & \textbf{11.7\%} \\
\bench{E2} & 6304 & 6473 & 82.8 & 76.4 & \textbf{7.7\%} \\
\bottomrule
\end{tabular}
\end{table}

Table~\ref{tab:power_nano} reports the mean rail power and primary-region energy of \tr{} on Nano.
On Nano, mean power on the measured rail rises by approximately 3--60\% under \care{}, but primary-region energy still decreases because the optimized region finishes sooner.
Across enabled benchmarks where both apply, \tr{} and \pr{} agree within 3\% in primary-region energy.

\paragraph{$\fcopy$ Across Platforms}

\begin{table}[t]
\caption{Primary-region copy fraction across three platforms. $\fcopy$ is baseline profiled GPU copy-only time divided by the corresponding baseline primary-region latency, matching the speedup scope used in the Amdahl analysis.}
\label{tab:fcopy_xplat}
\centering
\footnotesize
\begin{tabular}{@{}lrrr@{}}
\toprule
ID & Orin $\fcopy$ & Nano $\fcopy$ & Thor $\fcopy$ \\
\midrule
\bench{M1} & 40.5\% & 23.3\% & 7.3\% \\
\bench{M2} & 85.3\% & 78.3\% & 49.9\% \\
\bench{M3} & 67.1\% & 52.6\% & 20.9\% \\
\bench{C1} & 29.8\% & 15.7\% & 8.6\% \\
\bench{C2} & 20.1\% & 11.1\% & 5.9\% \\
\bench{E1} & 28.7\% & 22.6\% & 4.2\% \\
\bench{E2} & 13.7\% & 9.7\% & 1.2\% \\
\bottomrule
\end{tabular}
\end{table}

Across the 21 platform/benchmark points, the primary-region measurements follow the same Amdahl trend ($R^2=0.989$).
The principal deviations are \bench{M2} and \bench{M3} on Thor, where managed-memory kernel slowdown offsets part or all of the copy-elimination benefit; excluding these two points gives $R^2=0.994$.

% \section{Memory Footprint and Break-Even}\label{sec:appendix-overhead}
\section{{Complete-Loop Throughput and Buffer Capacity}}
\label{sec:appendix-resources}

Table~\ref{tab:resource_orin} in Section~\ref{sec:evaluation} reports Orin results. Tables~\ref{tab:resource_nano} and \ref{tab:resource_thor} gives the corresponding loop throughput and selected buffer capacity on Nano and Thor. Throughput uses the complete timed loop and each workload's work unit. Capacity counts the live allocations selected for unification instead of measuring process RSS, because CUDA allocation accounting in RSS is not comparable across the evaluated Tegra generations. The primary-region latency results are shown in Tables~\ref{tab:tr_orin}--\ref{tab:tr_thor}.

% Q recomputed from five raw result.json timed loops per method; M from the validated resource accounting.
\begin{table*}[t]
\centering
\caption{Complete-loop throughput and selected buffer capacity on Orin Nano. $Q$ is reported as mean $\pm$ sample SD across five independent runs; $M$ counts the live allocations selected for unification. Subscripts B and D denote Baseline and \tr{}, respectively.}
\label{tab:resource_nano}
\small
\begin{tabular}{@{}lrrrrr@{}}
\toprule
Benchmark & $Q_B$ (work units/s) & $Q_D$ (work units/s) & $Q_D/Q_B$ & $M_B$ (MiB) & $M_D$ (MiB) \\
\midrule
\bench{M1} & 57.78 $\pm$ 0.06 & 75.57 $\pm$ 0.05 & 1.308$\times$ & 64.000 & 32.000 \\
\bench{M2} & 142.71 $\pm$ 0.76 & 347.55 $\pm$ 1.10 & 2.435$\times$ & 64.000 & 32.000 \\
\bench{M3} & 141.27 $\pm$ 0.19 & 305.42 $\pm$ 0.23 & 2.162$\times$ & 58.188 & 29.094 \\
\bench{C1} & 27.90 $\pm$ 0.02 & 33.62 $\pm$ 0.10 & 1.205$\times$ & 90.000 & 45.000 \\
\bench{C2} & 19.80 $\pm$ 0.02 & 22.43 $\pm$ 0.07 & 1.133$\times$ & 90.000 & 45.000 \\
\bench{E1} & 137.17 $\pm$ 0.22 & 175.60 $\pm$ 0.96 & 1.280$\times$ & 102.869 & 51.434 \\
\bench{E2} & 61.48 $\pm$ 0.19 & 66.89 $\pm$ 0.07 & 1.088$\times$ & 24.500 & 12.250 \\
\bottomrule
\end{tabular}
\par\smallskip
\begin{minipage}{\textwidth}
% \footnotesize
% Work units are SGEMM jobs for \bench{M1}, forward-and-inverse FFT batches for \bench{M2}, convolution batches for \bench{M3}, images for \bench{C1}, \bench{C2}, and \bench{E1}, and feature-extraction frame jobs for \bench{E2}. \bench{E1} processes four images per iteration.
% The \bench{E2} engine supports up to 128 feature crops per frame.
\end{minipage}
\end{table*}
% Q recomputed from five raw result.json timed loops per method; M from the validated resource accounting.
\begin{table*}[t]
\centering
\caption{Complete-loop throughput and selected buffer capacity on AGX Thor. $Q$ is reported as mean $\pm$ sample SD across five independent runs; $M$ counts the live allocations selected for unification. Subscripts B and D denote Baseline and \tr{}, respectively.}
\label{tab:resource_thor}
\small
\begin{tabular}{@{}lrrrrr@{}}
\toprule
Benchmark & $Q_B$ (work units/s) & $Q_D$ (work units/s) & $Q_D/Q_B$ & $M_B$ (MiB) & $M_D$ (MiB) \\
\midrule
\bench{M1} & 300.67 $\pm$ 0.63 & 327.41 $\pm$ 0.44 & 1.089$\times$ & 64.000 & 32.000 \\
\bench{M2} & 817.02 $\pm$ 3.51 & 861.52 $\pm$ 6.49 & 1.054$\times$ & 64.000 & 32.000 \\
\bench{M3}$^{*}$ & 941.95 $\pm$ 20.60 & 953.53 $\pm$ 12.65 & 1.012$\times^{*}$ & 58.188 & 58.188 \\
\bench{C1} & 260.68 $\pm$ 1.00 & 303.45 $\pm$ 0.69 & 1.164$\times$ & 90.000 & 45.000 \\
\bench{C2} & 174.69 $\pm$ 0.96 & 194.09 $\pm$ 0.53 & 1.111$\times$ & 90.000 & 45.000 \\
\bench{E1} & 395.56 $\pm$ 3.04 & 410.68 $\pm$ 4.72 & 1.038$\times$ & 102.869 & 51.434 \\
\bench{E2}$^{*}$ & 321.66 $\pm$ 1.53 & 322.72 $\pm$ 1.01 & 1.003$\times^{*}$ & 3.000 & 3.000 \\
\bottomrule
\end{tabular}
\par\smallskip
\begin{minipage}{\textwidth}
% \footnotesize
% Work units are SGEMM jobs for \bench{M1}, forward-and-inverse FFT batches for \bench{M2}, convolution batches for \bench{M3}, images for \bench{C1}, \bench{C2}, and \bench{E1}, and feature-extraction frame jobs for \bench{E2}. \bench{E1} processes four images per iteration.
% \bench{E2} processes at most 16 feature crops per frame; the Orin and Nano engines support 128.
% $^{*}$\tr{} executes the baseline fallback selected by the profitability guard; $M_D$ therefore retains the baseline capacity.
\end{minipage}
\end{table*}

\section{Managed-Memory Slowdown on Thor}
\label{sec:blackwell-l2}
\label{sec:eval:blackwell}

% \yoon{This subsection is simply explaining additional measurements. Explain more at the beginning 'what is observed', 'why further investigation is needed', so 'what are additionally measured'. And at the end, provide the high-level takeaway. The current version fails to say 'why we need to pay attention to the Thor cases'. }
% On Thor, both \bench{M2} and \bench{M3} pass host-access validation, but slower computation offsets different amounts of the saved copy time \yoon{(See Figure X)} . The resulting plans differ in profitability. We examine this contrast and use five memory-access probes to characterize the behavior on the evaluated CUDA~13/R38.2.2 stack.
Thor exposes a limit of copy elimination that is not visible from the copy fraction alone. Its \bench{M2} plan remains profitable, but \bench{M3}'s guarded \tr{} result is a baseline fallback; the independently transformed \pr{} execution is slower than Baseline at 0.91$\times$. Both pairs pass observed host-access validation, so the different deployment decisions call for a performance explanation rather than another safety check. We examine the paired calibration trials to see whether saved copy time offsets the changed computation cost, then use five controlled memory-access probes to characterize that cost on the evaluated CUDA~13/R38.2.2 Thor stack.

\paragraph{The M2/M3 contrast.}
Table~\ref{tab:thor_profitability} reports the three paired calibration trials. For \bench{M2}, median compute-phase time rises from 0.156 to 0.257\,ms, an increase of 65\%. Eliminating the copies still reduces complete-loop latency by a median 6.3\%, so \tr{} enables the plan. For \bench{M3}, compute-phase time rises from 0.745 to 1.174\,ms, an increase of 58\%; complete-loop latency increases by 16.7\%, so the guard retains the baseline. The independent \pr{} measurement has no runtime profitability gate and yields a primary-region speedup of 0.91$\times$ on \bench{M3}.

\begin{table}[t]
\centering\footnotesize
\caption{Thor profitability calibration. Compute times are medians of three process means; full-loop reductions are medians of the three paired relative reductions. Candidate runs execute the transformed plan before the deployment decision. A negative reduction denotes increased latency.}
\label{tab:thor_profitability}
\setlength{\tabcolsep}{3pt}
\begin{tabular}{@{}lrrl@{}}
\toprule
ID & \shortstack{Compute (ms)\\Baseline $\to$ candidate} & \shortstack{Full-loop\\reduction} & Decision \\
\midrule
\bench{M2} & $0.156\to0.257$ & $+6.3\%$ & Enable \\
\bench{M3} & $0.745\to1.174$ & $-16.7\%$ & Fallback \\
\bottomrule
\end{tabular}
\end{table}

\paragraph{Access-pattern characterization.}
The five probes in Table~\ref{tab:blackwell_t1t5} compare managed and explicit allocations while varying access type, buffer size, computation, stride, and repetition. At 16\,MB, the read probe slows down by 16.8$\times$ on Thor, compared with 1.3$\times$ for writes. The size, computation, and stride sweeps show that the penalty depends on the access pattern: it disappears at larger buffer sizes or with sufficient computation, but remains substantial for large strides. It persists over 100 launches. The matching Orin probes remain near 1.0$\times$.

\begin{table}[t]
\centering\footnotesize
\caption{Memory-access probes on Thor. Ratios are managed-allocation time divided by explicit-allocation time within each probe; the matching Orin probes remain near 1.0$\times$. T1 and T3--T5 use 16\,MB buffers. The probes and the calibration in Table~\ref{tab:thor_profitability} were collected in separate campaigns.}
\label{tab:blackwell_t1t5}
\setlength{\tabcolsep}{4pt}
\begin{tabular}{@{}ll@{}}
\toprule
Probe & Thor observation \\
\midrule
T1: Read/write & Reads: $16.8\times$; writes: $1.3\times$. \\
T2: Buffer size sweep & Peak $1.71\times$ at 16\,MB; $0.83\times$ at 24\,MB. \\
T3: Computation & $1.02\times$ at 200 FMA/element. \\
T4: Stride & Reaches $10.7\times$ at 128 elements. \\
T5: Repetition & About $1.7\times$ over 100 launches. \\
\bottomrule
\end{tabular}
\end{table}

% With unchanged launch configurations, these probes show that allocation choice interacts with the kernel's access pattern. Slower computation can absorb part or all of the saved copy time. \tr{}'s complete-loop calibration captures this tradeoff, enabling \bench{M2} and retaining the baseline on \bench{M3}.
The takeaway is that semantic eligibility and net benefit are separate decisions on Thor. Managed allocations make the tested \bench{M2} and \bench{M3} computation slower, with the impact depending on access pattern. Saved copies still outweigh that cost for \bench{M2}; for \bench{M3}, they do not, so \tr{} retains the baseline. The probes characterize the measured platform/software stack rather than establish a vendor-confirmed cache mechanism. Complete-loop calibration matters because copy fraction alone cannot determine which plan should run.

\end{document}